\newcommand{\ignore}[1]{}
\newtheorem{theorem}{Theorem}[section]
\newtheorem{proposition}{Proposition}[section]
\newtheorem{corollary}{Corollary}[section]
\newtheorem{lemma}{Lemma}[section]
\newtheorem{definition}{Definition}[section]
\newif\ifreport\reportfalse
\newcommand\numberthis{\addtocounter{equation}{1}\tag{\theequation}}
\begin{document}

\title{Defending Against Stealthy Attacks on Multiple Nodes with Limited Resources: A Game-Theoretic Analysis\thanks{This work has been funded by QNRF fund NPRP 5-559-2-227, ARO-W911NF-15-1-0277, NSF grant CNS-1816495, and a grant from the Board of Regents of the State of Louisiana LEQSF(2017-19)-RD-A-15.}}

\author{Ming~Zhang,~\IEEEmembership{Student Member,~IEEE,}
        Zizhan~Zheng,~\IEEEmembership{Member,~IEEE,}
        and~Ness~B.~Shroff,~\IEEEmembership{Life~Fellow,~IEEE}
\thanks{M. Zhang (zhang.2562@osu.edu) is with the Department of Computer Science and Engineering, Ohio State University, Columbus,
OH, 43202 USA.}
\thanks{Z. Zheng (zzheng3@tulane.edu) is with the Department of Computer Science, Tulane University, New Orleans, LA, 70118, USA.}
\thanks{N. Shroff (shroff.11@osu.edu) is with the Departments of ECE and CSE, Ohio State University, Columbus, OH, 43202, USA.}}

\maketitle

\begin{abstract}
Stealthy attacks have become a major threat for cyber security. Previous works in this direction fail to capture the practical resource constraints and mainly focus on one-node settings. In this paper, we propose a two-player game-theoretic model including a system of multiple independent nodes, a stealthy attacker and an observable defender. In our model, the attacker can fully observe the defender's behavior and the system state, while the defender has zero feedback information. Further, a strict resource constraint is introduced to limit the frequency of the attacks/defenses for both players. We characterize the best responses for both attacker and defender under both non-adaptive and adaptive strategies. We then study the sequential game where the defender first announces its strategy and the attacker then responds accordingly. We have designed an algorithm that finds a nearly optimal strategy for the defender and provided a full analysis of its complexity and performance guarantee.
\end{abstract}

\begin{IEEEkeywords}
Stealthy Attacks, Resource Constraints, Game Theory
\end{IEEEkeywords}

\section{Introduction}

Increasingly sophisticated cyber attacks constantly push the evolution of cyber security. In recent years, worldwide organizations and IT companies, e.g., United Nation, Google and Amazon, are facing a significantly increasing number of {\it Advanced Persistent Threats} (APT)~\cite{APT}. The APT attack has several distinguishing properties that render traditional defense mechanisms less effective. First, they are often launched by {\it incentive driven} entities, including government and competitive companies with specific targets. Second, the APT attack is {\it persistent}, which usually involves multiple stages and frequent compromises of the system. Based on \cite{fireeyereport2018}, half of the entities suffering APT attacks experienced another successful compromise within one year. Third, they are highly adaptive and {\it stealthy}, often operating in a ``low-and-slow'' fashion~\cite{graph-apt} in order to maintain a small footprint and avoid of being detected. In fact, some of the past APT attacks have been so effective because they have gone undetected for months or longer~\cite{CDorked,Gauss}. Hence, conventional security measures against one-shot attack and known attack types are not sufficient in the face of long-lasting and stealthy attacks. Meanwhile, the objective of APT attacks usually includes the key information theft and complete control over the system, resulting in a much bigger loss than traditional cyber attacks.


In this paper, we study a two-player non-zero-sum game that explicitly models stealthy attacks with resource constraints, as an extension of the asymmetric version of the FlipIt game considered in~\cite{asymmetric-model} . We consider a system with $N$ {\it independent} nodes (or components), an attacker, and a defender. Both players compete for the control of the system by attacking or defending each node, subject to an instantaneous move cost per node and a long-term average resource constraint across the entire system. The attacker tries to maximize its benefits by successfully compromising nodes, and the defender aims at minimizing the total defense cost and value loss incurred by losing control of a node.

To model the stealthy attacks, we assume that the defender has no feedback about the node state and the attacker's behavior across the entire game, which is reasonable in many security setups. On the other hand, the attacker is capable of observing the defender's each move as well as the node state, and makes decisions accordingly. In this work, we consider two commonly adopted solution concepts, Nash Equilibrium and Sequential Equilibrium, both of which have been applied to cybersecurity. In the former, the defender and the attacker determine their strategies at the beginning of the game simultaneously, while in the latter, the defender acts as the leader of the game and commits to a strategy first, and the attacker as the follower then responds accordingly.

For tractability and simplifying the analysis, we assume that the set of nodes are 
{\it independent} in the sense that the proper functioning of one node does not depend on other nodes, which serves as a first-order approximation of the more general setting of interdependent nodes to be considered in our future work. Despite of the assumption that each node is independent, the multi-node setting together with the resource constraints impose significant challenges in characterizing the best responses, Nash Equilibria and Sequential Equilibria of the games.



One example where our game model can be applied is key rotation. For a system with multiple communication links or servers that are protected by different keys, an APT attacker may compromise some of the keys from time to time. A common practice is to periodically generate fresh keys by a trusted key-management service, without knowing when they are compromised. On the other hand, the attacker can easily detect when the key expires with a negligible cost and there is a constraint on the frequency of moves at both sides. There are also other examples where our model can be useful such as password reset and virtual machine refreshing~\cite{flipit,flipit-2,asymmetric-model}.

To help reader better understand our main results, we briefly explain the key concepts below. Formal definitions can be found in Sections~\ref{sec:model} and \ref{sec:best-response}.
\begin{itemize}
\item In a {\it periodic defense} strategy: the defender protects each node periodically. That is, the time interval between two consecutive defenses is fixed for a given node.
\item In an {\it i.i.d. attack} strategy : the attacker's waiting time before each attack (modeled as a random variable) is {\it i.i.d.} across time.
\item In a {\it Markovian defense (resp. attack)} strategy: the time interval between two defenses (resp. the waiting time of each attack) follows a Markov process.
\item Nearly Optimal strategy: For arbitrary small positive number $\epsilon$, we can always find a strategy that the performance difference between this strategy and the theoretical optimal strategy is less than $\epsilon$.
\end{itemize}

We have made following contributions in this paper with the main results summarized in Table~\ref{tbl:mainresults}~\footnote{In Table~\ref{tbl:mainresults}, $A \rightarrow B$ means that $B$ is a best response against $A$; $A \nrightarrow B$ means that $B$ is NOT a best response against $A$}.
\begin{itemize}
\item We propose a two-player game model with multiple independent nodes, an overt defender, and a stealthy attacker where both players have strict resource constraints. 
\item We prove that periodic defense is a best response against {\it i.i.d.} attack among {\it all} defense strategies, and {\it i.i.d.} attack is a best response against periodic defense among {\it all} attack strategies. We further consider Markovian strategies and prove that periodic defense is still a best response against a Markovian attacking strategy, but {\it i.i.d.} attack is not necessarily a best response against a Markovian defending strategy.
\item For the pair of periodic defense and {\it i.i.d.} attack strategies, we fully characterize the set of Nash Equilibria of our game, and show that there is always one (and maybe more) equilibrium, for the case when the attack times are deterministic.
\item We further consider the sequential game with the defender as the leader and the attacker as the follower. We design a dynamic programming based algorithm 
that identifies a nearly optimal strategy (in the sense of subgame perfect equilibrium) for the defender. We also fully characterize the trade-off between algorithm performance and its complexity.
\end{itemize}

\begin{table}[!t]
\centering
\caption{Main Results}
\renewcommand{\arraystretch}{1.05}
\label{tbl:mainresults}
\small{
\begin{tabular}{|c|c|c|}
\hline \multicolumn{1}{|c|}{} & \multicolumn{1}{|c|}{Attacker} & \multicolumn{1}{|c|}{Defender}\\
\hline
\multirow{3}{*}{Best Response} & \multicolumn{2}{|c|}{{\it i.i.d.} attack $\leftrightarrows$ periodic defense} \\ \cline{2-3}
                               & \multicolumn{2}{|c|}{Markovian attack $\rightarrow$ periodic defense} \\ \cline{2-3}
                               & \multicolumn{2}{|c|}{{\it i.i.d.} attack $\nleftarrow$ Markovian defense} \\ \cline{2-3}
\hline
Nash Equilibrium & \multicolumn{2}{|c|}{A complete characterization of NEs (6 types)}\\
\hline
 \multirow{3}{*}{Sequential Game} & Optimal attack		  		   & A polynomial time  \\
 								  & under a given                  & algorithm for optimal \\
 								  & defense strategy~\eqref{snegame2}      & defense (Algorithm~\ref{alg:snefinal}) \\
\hline
\end{tabular}
}
\end{table}

This paper is the extended version of \cite{zhang2015game}. In addition to improving the presentation and organization of the paper, we have provided in this journal submission version (i) an extension of the defender's and attacker's best response strategies from the non-adaptive setting to the general adaptive setting, (ii) we provide some preliminary analysis about Markovian strategies for both attacker and defender, and (iii) a better understanding of the performance vs. complexity trade-off of our algorithm for the sequential game, reducing its complexity by a factor of $O(N^3)$ with the same performance guarantee. 
\ifreport
\begin{itemize}
\item We have extended the strategy spaces for both defender and attacker to adaptive strategies. The optimality of deterministic defense strategies against any attacking strategy is established and the independent non-adaptive attacking strategies are proven optimal against any deterministic defense strategy.
\item We have studied the best response strategies for both attacker and defender when the other player employs a Markovian strategy. We have shown that the periodic defending strategy is still optimal against any Markovian attacking strategy.
    However, the {\it i.i.d.} attacking strategy is not optimal against a Markovian defending strategy.
\item For the sequential game, we have reduced the complexity of Algorithm~\ref{alg:snefinal} by a factor of $O(N^3)$ with a detailed discussion on the performance vs. complexity trade-off.
\end{itemize}
\fi

The remainder of this paper is organized as follows. A summary of related work is provided in Section~\ref{sec:related}.
We present our game-theoretic model in Section~\ref{sec:model}, and study best-response strategies of both players in Section~\ref{sec:best-response}. 
The sequential game is studied in Section~\ref{sec:sequential}. In Section~\ref{sec:numerical}, we present numerical result, and we conclude the paper in Section~\ref{sec:conclusion}.

\section{Related Work}\label{sec:related}
Game theory has been extensively applied to cyber-security and network security~\cite{Basar-network-security-book,Basar-network-security-survey,
interdependent-security,
Tambe-security-game}. However, traditional models mainly focus on known attacks and largely ignore the budget constraints of both the defender and the attacker.

As mentioned in the introduction, our model is inspired by the FlipIt game~\cite{flipit,flipit-2} proposed in response to an APT attack towards RSA Data Security~\cite{RSA}, a non-zero-sum dynamic game that explicitly models the stealthy takeover 
of a single node. 
In the original model, a player obtains control over a component instantaneously by ``flipping" it, and obtains feedback only when it moves. Dominant strategies or strongly dominant strategies are characterized for several classes of periodic and renewable strategies and some simple 
adaptive strategies. But the full analysis of Nash Equilibrium is only provided when both the defender and the attacker employ a periodic strategy with a random starting phase. Several variants of the basic model have been studied~\cite{FlipThem,asymmetric-model}. In particular, a multi-node extension is considered in~\cite{FlipThem} where the attacker needs to compromise either all the nodes (AND model) or a single node (OR model) to take over a system. The authors name such a model as ``FlipThem''. However, only preliminary analytic results are provided. Leslie et al. extend the ``FlipThem'' model in \cite{leslie2015threshold,leslie2017multi} where the attacker can obtain partial benefits by compromising a certain number (larger than a threshold) of nodes. 
An asymmetric model similar to ours where the attacker is stealthy while the defender is observable is considered in~\cite{asymmetric-model}, where full Nash Equilibrium analysis is provided but only for the single node setting. In \cite{nochenson2013behavioral}, Nochenson et al first initiate the effort of adding player's characterization information including gender and age in the FlipIt game model. Basak et al \cite{basak2018initial} further extend the concept by adding different type of rationale of human agents. In \cite{zheng2017reset}, Zheng et al. use multi-armed bandit model to investigate the optimal timing of security updates against stealthy attacks. There are also some behaviorial studies of the FlipIt game~\cite{FlipIt-behavior}. However, none of the previous works considered an explicit resource constraint on the players.

A different type of security game has also been studied in the literature mainly for protecting physical infrastructures~\cite{Tambe-AAMAS08,Tambe-security-game,Tambe-JAIR11,Tambe-AAMAS13}. Essentially a mixed strategy Stackelberg game is considered, where the defender is the leader and the attacker is the follower. The key assumption is that the defender first decides upon a randomized defense policy, and the attacker then observes the randomized policy of the defender but not its realization before taking an action. While this is a useful assumption under certain scenarios, it may not hold when the attacker is highly adaptive. In particular, since the attacker may be able to observe the defender's previous actions, it could take an action before the defender changes its policy to get more benefit. Moreover, the two-stage game is insufficient to capture the persistent and stealthy behaviors of advanced attacks. In spite of the fundamental differences of the two models, recent work that extend this model to multiple defenders and bounded rationality~\cite{Tambe-IJCAI13,Tambe-AAAI13} provide useful insights to our model as well, which will be studied in our future work.

\section{Game Model}\label{sec:model}
In this section, we discuss our two-player game model including its information structure, the action spaces of both attacker and defender, and their payoffs. Our game model extends the single node model in~\cite{asymmetric-model} to multiple nodes and includes a resource constraint on each player.

\subsection{Basic Model}\label{sec:basicmodel}
In our game-theoretic model, there are two players (the defender and the attacker) and a network of $N$ {\it independent} nodes\footnote{The terms ``components'' and ``nodes'' are interchangeable in this paper.}. Each node has a value of $r_i$ representing the payoff the attacker can receive per unit time by successfully compromising node $i$.  
We consider finite time horizon where the game starts at time $t=0$ and goes to any time $t=T$. We assume that time is continuous. 
Every time when the attacker starts an attack for node $i$, it incurs a cost of $C^A_i$ and takes a random period of time $\alpha_{i,k}$ to succeed. On the other hand, if the defender makes a move to protect node $i$, the node is immediately recovered and incurring a cost of $C_i^D$. Further, this information is immediately learned by the attacker. The attacker's strategy is to determine 
$W_{i,k}$, 
the waiting time from the defender's $k$-th move to its next attack on node $i$, for each $i$ and $k$. On the contrary, the defender's strategy is to determine the time intervals between its $(k-1)$-th move and $k$-th move for each node $i$ and $k$, denoted as $X_{i,k}$. 
Both the attacker's and the defender's strategies can be randomized and adaptive in general.

In this paper, an attack strategy is considered {\it adaptive} when the attacker's decision on $W_{i,k}$ for any $i$ and $k$ can depend on the realized value of $X_{i',k'}$ for any $i'$ and $k' \leq k$. An adaptive defense strategy is defined similarly. On the other hand, a strategy is defined as {\it non-adaptive} if the values of $W_{i,k}$'s and $X_{i,k}$'s are either pre-computed or follow fixed probability distributions. 
The attacker (defender) can attack (defend) multiple nodes at the same time and maintain their possession until the other player's next move, which may or may not change the node state.


In addition to the move cost, we introduce a strict resource constraint for each player, which is a practical assumption but has been ignored in most prior works on security games. In particular, we place an upper bound on the average amount of resource that is available to each player at any time (to be formally defined below). As in typical security games, we assume that $r_i, C^A_i, C^D_i$, the distribution of $\alpha_{i,k}$, and the budget constraints are all common knowledge of the game, that is, they are known to both players. 
Without loss of generality, all nodes are assumed to be protected at time $t=0$. Table~\ref{notation} summarizes the notations used in the paper.

\begin{figure}[!t]
\centering
\includegraphics[width=0.8\linewidth]{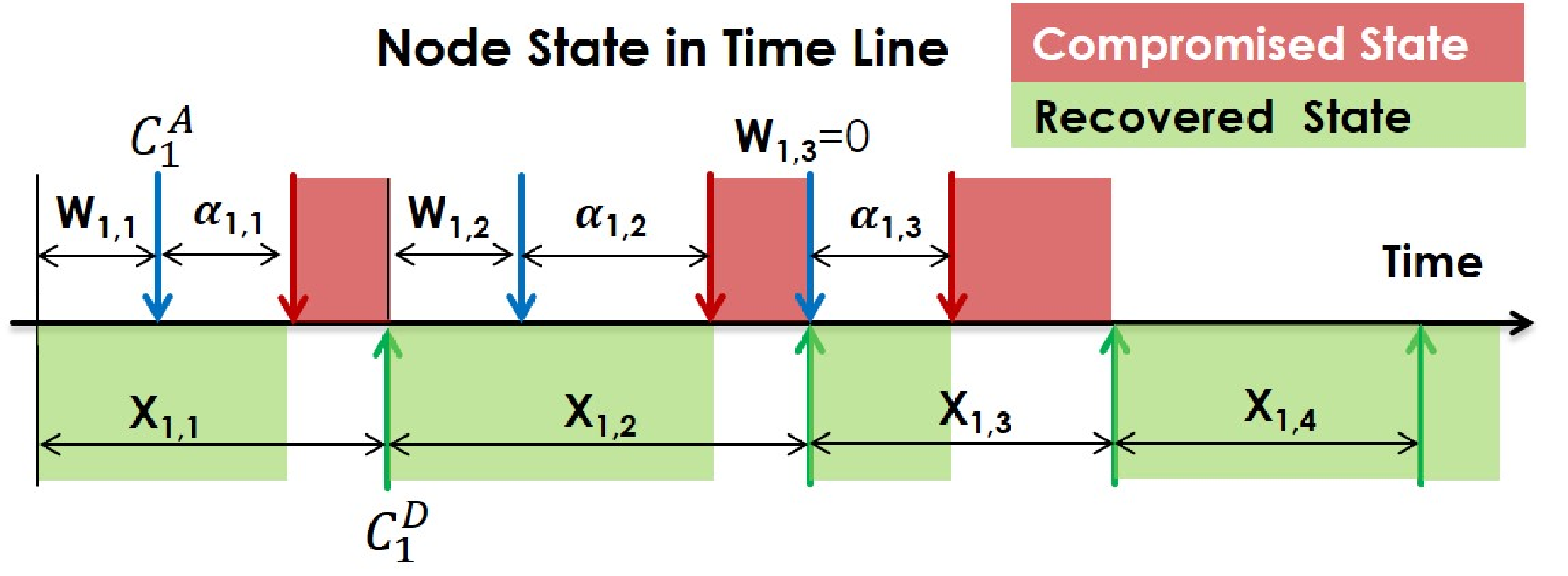}
\caption{Game Model}
\label{model}
\end{figure}

As in~\cite{asymmetric-model}, we consider an asymmetric feedback model where the attacker's moves are {\it stealthy}, while the defenders' moves are {\it observable}. More specifically, at any time, the attacker knows the full history of moves by the defender, as well as the state of each node, while the defender does not know whether a node is compromised or not. This asymmetric information structure is crucial in modeling stealthy attacks in cyber security. 

In this paper, we consider both non-adaptive and adaptive strategies. We define the strategy space for the attacker as all possible $W_{i,k}\ \forall i,k$ that follows a joint distribution. Similarly, the defender's strategy space refers to all possible $X_{i,k}\ \forall i,k$ following a joint distribution. Since the defender cannot observe the attacker's behavior and node states, we only need to consider non-adaptive strategies for the defender. That is, the defender's decisions on $X_{i,k}$'s can be independent of the realization of $W_{i,k}$'s. On the other hand, the attacker can observe the defender's moves. Thus in general, $W_{i,k}$ may depend on the realization of both $X_{j,\tau}$ and $W_{j,\tau}$ for any $j$ and $\tau$ such that $T_j(\tau)<T_i(k)$ where $T_j(\tau)$ refers to the time instance of node $j$'s $\tau$-th defense.

\subsection{Defender's Problem}
We model the total cost to the defender as the summation of the total time when nodes are compromised and the total move cost. The defender aims at maximizing its payoff, which is defined as the negation of its total loss. Given the attacker's strategy $\{W_{i,k}\}$, the defender faces the following optimization problem:

\begin{equation}
\begin{aligned}
\label{generaldefender1}
\max_{\{X_{i,k}\},L_i}&E\bigg[ \sum_{i=1}^N\bigg(\frac{-\left(T-\sum_{k=1}^{L_i} \min(W_{i,k}+\alpha_{i,k},X_{i,k})\right)\cdot r_i}{T}\\
                        &\ \ \ \ \ \ \ -\frac{L_i C_i^D}{T}\bigg) \bigg] \\
\end{aligned}
\end{equation}
and the optimization variable $X_{i,k}$ and $L_i$ satisfy the following two constraints.
\begin{equation}\label{eqa:defenderconstraint}
\begin{aligned}
&\sum_{i=1}^N\frac{L_i}{T}\leq B \ \text{with probability 1} \ \\
&\sum_{k=1}^{L_i} X_{i,k}\leq T\ \text{with probability 1} \ \forall i
\end{aligned}
\end{equation}
where $L_i$ (a random variable) is the total number of defense applied to node $i$ during time $T$. In ~\eqref{generaldefender1}, $T-\sum_{k=1}^{L_i} \min(W_{i,k}+\alpha_{i,k},X_{i,k})$ refers to the total time when node $i$ is compromised and $L_iC_i^D$ is the overall move cost. The first constraint defines an upper bound $B$ of the average number of nodes that can be protected at any time. 
The second constraint in \eqref{eqa:defenderconstraint} defines the feasible set of $X_{i,k}$.

\subsection{Attacker's Problem}
Given the defender's strategy $\{X_{i,k}\}$, the total cost of attacking node $i$ is then $(\sum_{k=1}^{L_i}\phi({W_{i,k},X_{i,k}}))\cdot C_i^A$, where $\phi({W_{i,k},X_{i,k}})= 1$ if $W_{i,k}<X_{i,k}$ and $\phi({W_{i,k},X_{i,k}})= 0$ otherwise. It is important to note that when $W_{i,k}\geq X_{i,k}$, the attacker actually gives up its $k$-th attack against node $i$ (this is possible as the attacker can observe when the defender moves). 
The attacker's problem can be formulated as follows, where $M$ is an upper bound on the average number of nodes that the attacker can attack at any time instance.
\begin{equation}
\begin{aligned}
\label{generalattacker1}
\max_{W_{i,k}}&\ \ E\bigg[\sum_{i=1}^N \frac{(T-\sum_{k=1}^{L_i}\min(W_{i,k}+\alpha_{i,k},X_{i,k}))\cdot r_i}{T}\\
                    &\ \ \ \ \ \ \ \ \ \ -\frac{(\sum_{k=1}^{L_i}\phi({W_{i,k},X_{i,k}}))\cdot C_i^A}{T}\bigg]\\
\end{aligned}
\end{equation}
and the attacker needs to satisfy the following constraint
\begin{equation}\label{eqa:attackerconstraint}
E\left[\sum_{i=1}^N\frac{1}{T}\int_0^T v_i(t)dt\right]\leq M
\end{equation}
where $v_i(t)=1$ if the attacker is attacking node $i$ at time $t$ and $v_i(t)=0$ otherwise. Note that we make the assumption that the attacker has to keep consuming resources when the attack is in progress. 
We further have the following equation:
\begin{equation}
\label{generalattacker3}
\int_0^T v_i(t)dt=\sum_{k=1}^{L_i}\left(\min(W_{i,k}+\alpha_{i,k},X_{i,k})-\min(W_{i,k},X_{i,k})\right)
\end{equation}

Putting (\ref{generalattacker3}) into (\ref{generalattacker1}), \eqref{eqa:attackerconstraint} and moving the expectation inside, the attacker's problem becomes
\begin{equation}
\begin{aligned}
\label{generalattacker4}
\max_{W_{i,k}} &\sum_{i=1}^N \bigg[\frac{T\cdot r_i-E[\sum_{k=1}^{L_i}\min(W_{i,k}+ \alpha_{i,k},X_{i,k})]\cdot r_i}{T}\\
  &\ \ \ \ \ \ \ \ \ \ \ \ \ \ -\frac{E[\sum_{k=1}^{L_i}P(W_{i,k}<X_{i,k})]\cdot C_i^A}{T}\bigg]\\
\end{aligned}
\end{equation}
with resource constraints
\begin{equation}\label{eqa:resourceconstaint}
\begin{aligned}
\sum_{i=1}^N \bigg[&\frac{E[\sum_{k=1}^{L_i}(\min(W_{i,k}+\alpha_{i,k},X_{i,k})-\min(W_{i,k},X_{i,k}))]}{T}\bigg]\\
&\leq M
\end{aligned}
\end{equation}

\begin{table}[!t]
	\centering
	\caption{List of Notations}
	\renewcommand{\arraystretch}{1.05}
	\label{notation}
	\small{
		\begin{tabular}{|c|l|}
			\hline \multicolumn{1}{|l|}{Symbol} & \multicolumn{1}{|c|}{Meaning} \\
			\hline
			$T$ & time horizon\\
			\hline
			$N$ & number of nodes\\
			\hline
			$r_i$ & value per unit of time of compromising node $i$\\
			\hline
			$\alpha_{i,k}$ & attacking time for node $i$ in the $k$-th move\\
			\hline
			$C_i^A$ & attacker's move cost for node $i$\\
			\hline
			$C_i^D$ & defender's move cost for node $i$\\
			\hline
			$W_{i,k}$ & attacker's waiting time in its $k$-th move for node $i$\\
			\hline
			\multirow{2}{*}{$X_{i,k}$} & time between the $(k-1)$-th and the $k$-th defenses\\ &for node $i$\\
			\hline
			$B$ & budget to the defender, greater than 0\\
			\hline
			$M$ & budget to the attacker, greater than 0\\
			\hline
			$m_i$ & frequency of defenses for node $i$\\
			\hline
			\multirow{2}{*}{$p_i$} & probability of immediate attack on node $i$\\ &after it recovers\\
			\hline
			$L_i$ & the number of defense moves for node $i$\\
			\hline
		\end{tabular}
	}
\end{table}

\section{Best Responses}\label{sec:best-response}
In this section, we analyze the best-response strategies for both players. Our main result is that when the attacker employs an {\it i.i.d.} strategy, a periodic strategy is a best response for the defender, and vice versa. To prove this result, however, we have provided characterization of best responses in more general settings.
\subsection{Defender's Best Response}\label{sec:defender-best-response}
We first show that an optimal deterministic defense strategy is always optimal in general for \eqref{generaldefender1}. We then prove that the periodic defense is optimal against {\it i.i.d.} attacks.

\begin{lemma}
\label{lemma:general}
Suppose $x_{i,k}^\star$ and $l_i^\star$ are the optimal solutions of \eqref{generaldefender1} among all deterministic strategies, then they are also optimal among all the strategies (including both adaptive and non-adaptive strategies).
\end{lemma}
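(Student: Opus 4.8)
The plan is to show that any randomized defender strategy can be matched (in expected payoff) by a deterministic one, so that restricting to deterministic strategies is without loss. The key observation is that the defender's objective in \eqref{generaldefender1} is \emph{linear} in the quantities $\min(\alpha_{i,k}+w_i, X_{i,k})$ and $L_i$, once we fix the attacker's strategy $\{\alpha_{i,k}\}$ (and the randomness of $w_i$). Concretely, a randomized defender strategy is a probability distribution over deterministic strategies $\mu = (\{X_{i,k}\}, \{L_i\})$; write its expected payoff as $\mathbb{E}_{\mu \sim \mathcal{D}}[ \Pi(\mu) ]$ where $\Pi(\mu)$ is the (expectation over $w_i$ and over the possibly-random attacker waiting times $\alpha_{i,k}$ of the) per-time-average payoff of the deterministic profile $\mu$.

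First I would argue that the feasible set is preserved under this reduction: both constraints in \eqref{generaldefender1} are required to hold \emph{with probability one}, so every deterministic strategy $\mu$ in the support of a feasible randomized strategy $\mathcal{D}$ must itself satisfy $\sum_i L_i/T \le B$ and $\sum_k X_{i,k} \le T$ for all $i$; hence each such $\mu$ is a feasible deterministic strategy. Second, since $\mathbb{E}_{\mu \sim \mathcal{D}}[\Pi(\mu)] \le \sup_{\mu \in \mathrm{supp}(\mathcal{D})} \Pi(\mu) \le \max_{\mu \text{ deterministic feasible}} \Pi(\mu) = \Pi(X^\star, L^\star)$, the optimal deterministic value is an upper bound on the value of \emph{every} feasible randomized strategy. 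Since a deterministic strategy is in particular a (degenerate) randomized strategy, the reverse inequality is trivial, so $(X^\star_{i,k}, L^\star_i)$ is optimal among all strategies. The one subtlety to handle carefully is that the attacker's $\alpha_{i,k}$ may be random and may be correlated with nothing on the defender's side (the defender is non-adaptive and gets no feedback), so the expectation over the attacker's randomness and over $w_i$ factors cleanly through the outer expectation over $\mathcal{D}$; I would spell out that $\Pi(\mu) = \mathbb{E}_{w,\alpha}[\,\cdot\,]$ is well-defined for each fixed $\mu$ and that Fubini/tower property justifies interchanging the $\mathcal{D}$-expectation with the $w,\alpha$-expectation.

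The main obstacle is essentially bookkeeping rather than depth: one must be precise that a "randomized strategy" here is a distribution over the deterministic action profiles described in Section~\ref{sec:model} (including the random stopping number $L_i$), that the with-probability-one constraints force every atom of this distribution into the deterministic feasible set, and that no measurability pathologies arise when taking $\sup$ over the support — which is clean if one phrases the argument as "the mean of a random variable cannot exceed its essential supremum, which is at most the deterministic optimum." If one is worried about the existence of a maximizing deterministic $\mu^\star$ a priori, the statement already hypothesizes that $X^\star, L^\star$ achieve the deterministic optimum, so this is not an issue; I would simply invoke that hypothesis.
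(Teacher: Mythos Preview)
Your proposal is correct and follows essentially the same approach as the paper: both argue that a randomized strategy is a probability mixture over deterministic strategies, that the w.p.~1 constraints force every realization in the support to be deterministically feasible, and that the expected payoff is therefore bounded above by the deterministic optimum $\Pi(X^\star,L^\star)$. Your treatment is in fact slightly more careful than the paper's, since you explicitly note the Fubini/tower-property step needed to factor the expectation over the defender's randomization through the expectation over $(w_i,\alpha_{i,k})$, whereas the paper simply writes the payoff as a sum over realizations without comment.
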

\begin{proof}
Consider a general defense strategy $X_{i,k}$, we define $x_{i,k}$ and $l_i$ as the realizations of $X_{i,k}$ and $L_i$ respectively and let $\mathcal{C}=\{(x_{i,k},\ l_i) | \sum_{i=1}^N \frac{l_i}{T}\leq B\ and\ \sum_{k=1}^{l_i}x_{i,k}\leq T\}$. Let $U^D(X_{i,k},L_i)$ denote the target function of \eqref{generaldefender1} and denote
\begin{equation}
\begin{aligned}
&\hat{U}^D(x_{i,k},l_i)\\
&=\sum_{i=1}^N\frac{-\left(T-\sum_{k=1}^{l_i} E[\min(W_{i,k}+\alpha_{i,k},x_{i,k})]\right)\cdot r_i-l_iC_i^D}{T}
\end{aligned}
\end{equation}
Since the defender cannot observe the attacker's behavior, any realization of $X_{i,k}$ can be pre-determined. Thus, we can compute the total expected payoff for defender as follows:
\begin{equation}
\begin{aligned}
&U^D(X_{i,k},L_i)\\
&=P(X_{i,k}=x_{i,k}^\star\ L_i= l_i^\star,\ \forall i,k)\cdot \hat{U}^D(x_{i,k}^\star,l_i^\star)\\
&+\sum_{(x_{i,k},\ l_i)\in \mathcal{C} \atop x_{i,k}\neq X_{i,k}^\star\ l_i\neq L_i^\star}P(X_{i,k}= x_{i,k}\ L_i=l_i,\ \forall i,k)\cdot \hat{U}^D(x_{i,k},l_i)\\
&\leq \hat{U}^D(x_{i,k}^\star,l_i^\star)
\end{aligned}
\end{equation}
The equality holds only when $X_{i,k}=x_{i,k}^\star\ L_i= l_i^\star\ \forall i,k\ w.p.1$. Therefore, $x_{i,k}^\star$ and $l_i^\star$ are also optimal among all the defender's strategies.
\end{proof}

According to the lemma, it suffices to consider defender's strategies where both $X_{i,k}$ and $L_{i,k}$ are deterministic. It is also worth mentioning that the order in which nodes are defended makes no difference since the nodes are independent of each other. We then define the set of {\it i.i.d.} attack strategies and show that periodic defense is a best response against {\it i.i.d.} attacks.

\begin{definition}
\label{definition2}
An attack strategy is called {\it i.i.d.} if it is non-adaptive, and $W_{i,k}$ is independent across $i$ and is {\it i.i.d.} across $k$.
\end{definition}

\begin{restatable}{theorem}{periodictheorem}
\label{theorem:defender1}
Periodic defense is a best response among all defense strategies if the attacker employs an {\it i.i.d.} strategy.
\end{restatable}
To prove this result, we need the following definition. 
\begin{definition}
\label{definition1}
For a given $L_i$, we define a set $\mathcal{X}_i$ that includes all deterministic defense strategies for node $i$ with the following properties:
\begin{enumerate}
  \item $ \sum_{k=1}^{L_i} X_{i,k}=T$;
  \item $ F_{W_{i,k}+\alpha_{i,k}}(X_{i,k})=F_{W_{i,j}+\alpha_{i,j}}(X_{i,j})\ \  \forall k,j$,
\end{enumerate}
where $F_{W_{i,k}+\alpha_{i,k}}(\cdot)$ is the marginal CDF of $W_{i,k}+\alpha_{i,k}$. Let $\mathcal{X}$ denote the set of defense strategies where for each node $i$, a strategy in $\mathcal{X}_i$ is adopted.
\end{definition}

Note that (1) $\mathcal{X}_i$ can be an empty set in general due to the randomness of $W_{i,k}+\alpha_{i,k}$; (2) for deterministic $X_{i,k}$, $W_{i,k}$ is independent of any $X_{j,\tau}$ s.t. $T_j(\tau)\geq T_i(k)$. The following lemma shows that when $\mathcal{X}_i$ is non-empty for all $i$, any strategy that belongs to $\mathcal{X}$ is a defender's best deterministic strategy against a non-adaptive attacker.

\begin{lemma}\label{lemma:defenderlemma1}
Consider a non-adaptiv attack strategy. For any given set of $\{L_i\}$ with $\sum_{i=1}^N\frac{L_i}{T}\leq B$, if $\mathcal{X}_i \neq \emptyset$ for any $i$, then any strategy 
in $\mathcal{X}$ is a best deterministic strategy for the defender.
\end{lemma}

\begin{proof}
We first define the defender's payoff for node $i$ as
\begin{equation}
\begin{aligned}
U_i^D(X_{i,k},L_i)=&\frac{-\bigg(T-\sum_{k=1}^{L_i} E[\min(W_{i,k}+\alpha_{i,k},X_{i,k})]\bigg)\cdot r_i}{T}\\
                    &\ \ -\frac{L_iC_i^D}{T}.
\end{aligned}
\end{equation}
Since $\{L_i\}$ are fixed, Problem~\eqref{generaldefender1} can be divided into $N$ independent sub-problems as follows:
\begin{equation}
\label{lemma4.2proof}
\begin{aligned}
&\max_{ X_{i,k}} U_i^D(X_{i,k})\\
 &\ \ s.t.\  \sum_{k=1}^{L_i} X_{i,k}\leq T
\end{aligned}
\end{equation}

We first assume that $F_{W_{i,k}+\alpha_{i,k}}(X_{i,k})$ 
is continuous for any $i$ and $k$.
Since the attacking strategy is non-adaptive,
$X_{i,k}$ is independent of $W_{i,k}$. We can then prove that the objective function is concave 
by showing that the Hessian matrix of $U_i^D(\{X_{i,k}\})$ with respect to $X_{i,k}, (1 \leq k \leq L_i)$ is negative sem-definite. We note that even when $F_{W_{i,k}+\alpha_{i,k}}(X_{i,k})$ is not continuous, the concavity can still be proved using the subgradient concept. The details are omitted to save space.

Since $U_i^D(X_{i,k})$ is concave and continuously differentiable, 
the KKT conditions are both sufficient and necessary. From the KKT conditions, we have $\nu^\star(\sum_{k=1}^{L_i} X_{i,k}-T)=0$ and $ F_{W_{i,k}+\alpha_{i,k}}(X_{i,k})=F_{W_{i,j}+\alpha_{i,k}}(X_{i,j}), \forall k,j$, where $\nu^\star$ is the Lagrangian multiplier. It is clear that $U_i^D(X_{i,k})$ is maximized when the constraint is tight, that is,  $\sum_{k=1}^{L_i}X_{i,k}=T$. Note that  there may exists a set of $X_{i,k}$ with $\sum_{k=1}^{L_i} X_{i,k}< T$ that is also optimal for~\eqref{lemma4.2proof}. Thus, the two conditions in Definition~\ref{definition1} are sufficient but not necessary.
\end{proof}

We now prove Theorem~\ref{theorem:defender1}.
\begin{proof}
For any fixed $\{L_i\}$, let $X_i \triangleq [\frac{T}{L_i} \frac{T}{L_i} \cdots \frac{T}{L_i}]$. It is easy to check that $\{X_i\}$ satisfies the fist property in Definition~\ref{definition1} and will satisfy the second property if $\alpha_{i,k}$ is {\it i.i.d.} with respect to $k$. According to Lemma~\ref{lemma:defenderlemma1}, $\{X_i\}$ is an optimal (deterministic) solution given $\{L_i\}$. It follows that if we let $\{L_i^\star\}$ denote the optimal solution of
\begin{displaymath}
\max_{L_i}\sum_{i=1}^N\frac{-\left(T-\sum_{k=1}^{L_i} E[\min(W_{i,k}+\alpha_{i,k},\frac{T}{L_i})]\right)\cdot r_i-L_i C_i^D}{T}
\end{displaymath}
with resource constraint $\sum_{i=1}^N\frac{L_i}{T}\leq B$. Then $X_i^{\star} \triangleq [\frac{T}{L_i^{\star}} \frac{T}{L_i^{\star}} \cdots \frac{T}{L_i^{\star}}]$ is an optimal solution to the defender's problem. Hence, a periodic strategy with periods of $X_i^{\star}$ for all $i$ is a best-response strategy for the defender. 
\end{proof}

According to Theorem~\ref{theorem:defender1}, the defender use periodic strategy to keep the system stable, in the sense of the same total loss between two defenses. Since the distribution of attacker's waiting time $W_{i,k}$ does not change with time, a fixed defense interval provides the same expected payoff between every two consecutive moves. Moreover, the convexity of the defender's optimization problem guarantees an optimal solution under a given attack strategy.

\subsection{Attacker's Best Response}
We first analyze the attacker's best response against any deterministic defense strategy, then show that the {\it i.i.d.} strategy is the best response against periodic defense.

\begin{definition}
\label{definition3}
An attack strategy is called independent non-adaptive if it is non-adaptive, and $W_{i,k}$ is independent across $i$ and $k$.
\end{definition}
\begin{lemma}\label{lemma:adaptiveattacker}
When the defense strategy is deterministic, for any attacking
strategy (adaptive or non-adaptive), there always exists an independent non-adaptive strategy that gives the attacker the same payoff.
\end{lemma}
\begin{proof}
When the defense strategies are deterministic, we can move the expectation in \eqref{generalattacker4} after the summation over $k$ and the expectation is with respect to $\alpha_{i,k}$. The same for constraint \eqref{eqa:resourceconstaint}.
Then, the proof is done as long as we can construct an independent non-adaptive strategy $W'_{i,k}$ such that for all $i$ and $k$, we have
\begin{enumerate}
\item $E[\min(W_{i,k}+\alpha_{i,k},X_{i,k})]=E[\min(W'_{i,k}+\alpha_{i,k},X_{i,k})]$;
\item $E[\min(W_{i,k},X_{i,k})]=E[\min(W'_{i,k},X_{i,k})]$;
\item $P(W_{i,k}<X_{i,k})=P(W'_{i,k}<X_{i,k})$.
\end{enumerate}
Since $X_{i,k}$ is deterministic and $\alpha_{i,k}$ is independent across $i$ and $k$, the expectation above is with respect to the marginal distribution of $W_{i,k}$ only. Thus, we can construct $W'_{i,k}$ whose distribution is the same as $W_{i,k}$'s marginal distribution which does not depend on any realization of $X_{j,\tau}$ and $W_{j,\tau}$ s.t. $T_j(\tau)<T_i(k)$. Meanwhile, $W'_{i,k}$ is independent across $i$ and $k$.
\end{proof}
According to Lemma~\ref{lemma:adaptiveattacker}, it suffices to consider independent non-adaptive strategies when the defender uses deterministic strategies.

\begin{restatable}{lemma}{primelemmadefender}
\label{lemma:attackerlemma2}
When the defense strategy is deterministic, the attacker's best response (among non-adaptive strategies) must satisfy the following condition
\begin{equation}
\label{lemma:structure1}
W_{i,k}^\star=\begin{cases}
0\ \ \ &w.p.\ p_{i,k}\\
\geq X_{i,k}\ \ \ &w.p.\ 1-p_{i,k}
\end{cases}
\end{equation}
\end{restatable}
Please find the proof in Section~\ref{subsec::proofofattacklemma}. Lemma~\ref{lemma:attackerlemma2} implies that for each node $i$, the attacker's best strategy is to either attack node $i$ immediately after it realizes the node's recovery, or gives up the attack until the defender's next move. There is no incentive for the attacker to wait a small amount of time to attack a node before the defender's next move. The constraint $M$ actually determines the probability that the attacker will attack immediately. If $M$ is large enough, the attacker will never wait after defender's each move. We then find the attacker's best response when the defender employs the periodic strategy.

\begin{theorem}
\label{theorem:attacker1}
Assume that for any $i$, the attacking times $\alpha_{i,k}$'s are {\it i.i.d.} across $k$. When the defender employs a periodic strategy, the {\it i.i.d.} strategy is the attacker's best response among all strategies.
\end{theorem}
\begin{proof}
Suppose that the defender uses a periodic strategy where for any $i$, $X_{i,k} = 1/m_i$ for any $k$. With~\eqref{lemma:structure1}, the attacker's problem \eqref{generalattacker4} can be simplied to a fractional knapsack problem with decision variables $\{p_{i,k}\}$. For any given node $i$, $p_{i,k}$'s unit reward (payoff in the target function divided by weight in the constraint) across $k$ are all equal when $\alpha_{i,k}$'s are {\it i.i.d.} across $k$. Thus, setting all the $p_{i,k}$ in \eqref{lemma:structure1} equal is one of the optimal solution. Therefore, the {\it i.i.d.} strategy is a best solution for attacker when the defender uses a periodic strategy.
\end{proof}

\subsection{Simplified Optimization Problems}\label{sec:simplified}
We put particular emphasis on the case where the defender employs a periodic strategy and the attacker uses an {\it i.i.d.} strategy.
According to Theorem~\ref{theorem:defender1} and Theorem~\ref{theorem:attacker1}, periodic defense and {\it i.i.d.} attack can form a pair of best-response strategies with respect to each other. Consider such pairs of strategies. Let $m_i \triangleq \frac{L_i}{T}=\frac{1}{X_{i,k}}$, and let $p_i$ denote the probability that $W_{i,k}=0$ for all $k$. We assume that all the attacking times $\alpha_{i,k}$ are {\it i.i.d.} across $k$ and omit the subscript $k$ in $\alpha_{i,k}$. The optimization problems to the defender and the attacker can then be simplified as follows.

\noindent Defender's problem:
\begin{equation}
\begin{aligned}
\label{staticdefender1}
\max_{m_i}&\sum_{i=1}^N \left[\left(E[\min{(\alpha_{i},\frac{1}{m_i})}]p_ir_i-C_i^D\right)\cdot m_i-p_ir_i\right]\\
 &s.t.\ \sum_{i=1}^N m_i\leq B
\end{aligned}
\end{equation}

\noindent Attacker's problem:
\begin{equation}
\label{staticattacker2}
\begin{aligned}
\max_{p_i}&\sum_{i=1}^N p_i\cdot \left( r_i(1-E[\min(\alpha_{i},\frac{1}{m_i})]\cdot m_i)-C_i^Am_i\right)\\
s.t.& \sum_{i=1}^N E[\min(\alpha_i,\frac{1}{m_i})]\cdot m_i \cdot p_i\leq M
\end{aligned}
\end{equation}

We observe that the defender's problem is a continuous convex optimization problem, 
while the attacker's problem is a fractional knapsack problem. Therefore, the best response strategy of each side can be easily determined. Also, the time period $T$ disappears in both problems. It is worth mentioning that finding the Nash Equilibrium of \eqref{staticdefender1} - \eqref{staticattacker2} is very challenging since the constraint of \eqref{staticattacker2} is non-convex with respect to $m_i$, thus the strategy space of this generalized Nash Equilibrium problem (GNEP) is not jointly convex. 

\subsection{Markovian Strategies}\label{sec:markov}
Based on Theorems~\ref{theorem:defender1} and \ref{theorem:attacker1}, the defender's periodic strategy and attacker's {\it i.i.d.} strategy form a Nash equilibrium among all adaptive and non-adaptive strategies. However, it remains unclear what is the best response if one of the players uses an adaptive strategy. To the best of our knowledge, there has been virtually no discussion about adaptive strategies in the field of stealthy attacks. Further, even though 
a deterministic strategy is always optimal for the defender based on Lemma~\ref{lemma:general}, there may still exist 
non-deterministic strategies that are also optimal. Meanwhile, Nash equilibria under 
more general strategies from both players may exist. In this section, we provide some preliminary results in this direction by considering Markovian strategies from both the defender's and the attacker's perspectives. We assume that the attacker's waiting times $W_{i,k}$ follow~\eqref{lemma:structure1} and define a Markovian attacking strategy as follows:

\begin{definition}\label{def:markovstrategy}
An attacking strategy is a {\bf Markovian strategy} if the attack probabilities $\{p_{i,k}\}$ for node $i$ follow a discrete Markov chain over $K$ states $v_1,v_2,\cdots,v_K$ with transition matrix $M^A_i$. That is, $\Pr(p_{i,k+1}=v_s | p_{i,k}=v_t)=M^A_i(s,t)$ for any $s$ and $t$. 
\end{definition}

A Markovian defense strategy is defined similarly by considering $\{X_{i,k}\}$ instead of $\{p_{i,k}\}$. For tractability, we only consider the expected payoffs for the defender in a steady state. We show our main results about Markovian strategies in the following.

\begin{theorem}\label{thm:markovattacker}
If the attacker employs an ergodic Markovian strategy, the periodic strategy is defender's best response.
\end{theorem}

The proof can be found in Section~\ref{subsec:markovattakcerproof}. Theorem~\ref{thm:markovattacker} tells us that the defender still prefers using a periodic strategy when the attacker's strategy space includes Markovian strategies. Consequently, the pair of periodic strategy and {\it i.i.d.} strategy naturally forms the Nash equilibrium in this case. However, the {\it i.i.d.} attack strategy may not be optimal against a Markovian defending strategy as shown in the following theorem.

\begin{theorem}\label{thm:defendermarkov}
If the defender employs a Markovian strategy, the {\it i.i.d.} attack strategy is not optimal in general.
\end{theorem}


The detailed proof can be found in Section~\ref{subsec:markovproof}. Theorem~\ref{thm:defendermarkov} tells us that the attacker may use an adaptive strategy against the Markovian defending strategy. Compared to the defender, since the attacker is able to observe the defending periods and the node states, the attacking strategy may become state-dependent.  Therefore, 
Nash equilibria beyond periodic defense and {\it i.i.d.} attack can exist in the space of both adaptive and non-adaptive strategies.

\subsection{Discussion on Security Games in Networks}
In this work, we focus on protecting a set of independent nodes where the payoff functions are additive, that is, the total payoff to a player is a weighted summation of the payoffs from each node. Even in this case, finding the equilibrium solutions of the game \eqref{staticdefender1} - \eqref{staticattacker2} is already very challenging as we mentioned in Section~\ref{sec:simplified}. Solving a security game in a general network setting that yields non-additive utility is even harder. Because of that, existing security game work typically assume additive utility as we did.  

To extend our solutions discussed in Sections~\ref{sec:equilibrium} and~\ref{sec:sequential} to a network setting, a promising direction is to introduce non-additive payoff functions to the defender and the attacker to capture the dependencies of node values.   There are several recent work \cite{gueye2012towards,wang2017security,wang2017non} that consider security games in network settings. In particular, Wang et al. \cite{wang2017security} developed a general framework to convert a security game with non-additive utility to a combinatorial optimization problem over a set system, and characterized the complexity of finding the Nash Equilibrium. However, efficient algorithms are only known for some special cases and none of them apply to our setting directly. Further, most previous work on security games including \cite{wang2017security} consider a static setting (or the stead state in a repeated setting) where the game is played only once, which cannot faithfully model the joint spatial and temporal decisions in dynamic stealthy games as we consider in this paper.


\section{Nash Equilibria}\label{sec:equilibrium}

In this section, we study the set of Nash Equilibria of the game where the defender employs a periodic strategy, and the attacker employs an {\it i.i.d.} strategy. For tractability, we further assume that the attacking time $\alpha_{i,k}$ is deterministic for all $i$ and we omit the subscript $k$. 
We show that this game always has a Nash equilibrium and may have multiple equilibria of different values.


We first observe that for deterministic $\alpha_i$, when $m_i \geq \frac{1}{\alpha_i}$, the defender's payoff becomes $-m_iC^D_i$, which is maximized when $m_i = \frac{1}{\alpha_i}$. Therefore, it suffices to consider $m_i \leq \frac{1}{\alpha_i}$. Thus, the optimization problems to the defender \eqref{staticdefender1} and the attacker \eqref{staticattacker2} can be simplified as follows.

For a given $p$, the defender aims at maximizing its payoff:
\begin{equation}
\begin{aligned}
\label{sbgame1}
\max_{m_i}&\sum_{i=1}^N[m_i(r_i\alpha_ip_i-C_i^D)-p_ir_i]\\
s.t.\ \ &\sum_{i=1}^N m_i\leq B\\
&0\leq m_i\leq \frac{1}{\alpha_i}, \forall i
\end{aligned}
\end{equation}

On the other hand, for a given $m$, the attacker aims at maximizing its payoff:
\begin{equation}
\begin{aligned}
\label{sbgame2}
\max_{p_i}&\sum_{i=1}^Np_i[r_i-m_i(r_i\alpha_i+C_i^A)]\\
s.t.\ \ &\sum_{i=1}^N m_i\alpha_ip_i\leq M\\
&0\leq p_i\leq 1, \forall i
\end{aligned}
\end{equation}

For a pair of strategies $(m,p)$, the payoff to the defender is $U_d(m,p) = \sum_{i=1}^N[m_i(p_ir_iw_i-C_i^D)-p_ir_i]$, while the payoff to the attacker is $U_a(m,p) = \sum_{i=1}^Np_i[r_i-m_i(r_iw_i+C_i^A)]$. A pair of strategies $(m^*,p^*)$ is called a (pure strategy) {\it Nash Equilibrium (NE)} if for any pair of strategies $(m,p)$, we have $U_d(m^*,p^*) \geq U_d(m,p^*)$ and $U_a(m^*,p^*) \geq U_a(m^*,p)$. In the following, we assume that $C^A_i>0$ and $C^D_i>0$. The cases where $C^A_i=0$ or $C^D_i=0$ or both exhibit slightly different structures, but can be analyzed using the same approach. Without loss of generality, we assume $r_i>0$ and $\frac{C^D_i}{r_iw_i}\leq 1$ for all $i$. Note that if $r_i = 0$, then node $i$ can be safely excluded from the game, while if $\frac{C^D_i}{r_iw_i}> 1$, the coefficient of $m_i$ in $U_d$ (defined below) is always negative and there is no need to protect node $i$.

Let $\mu_i(p) \triangleq p_ir_iw_i-C_i^D$ denote the coefficient of $m_i$ in $U_d$, and $\rho_i(m) \triangleq \frac{r_i-m_i(r_iw_i+C_i^A)}{m_iw_i}$. Note that for a given $p$, the defender tends to protect more a component with higher $\mu_i(p)$, while for a given $m$, the attacker will attack a component more frequently with higher $\rho_i(m)$. When $m$ and $p$ are clear from the context, we simply let $\mu_i$ and $\rho_i$ denote $\mu_i(p)$ and $\rho_i(m)$, respectively.

To find the set of NEs of our game, a key observation is that if there is a full allocation of defense budget $B$ to $m$ such that $\rho_i(m)$ is a constant for all $i$, any full allocation of the attack budget $M$ gives the attacker the same payoff. Among these allocations, if there is further an assignment of $p$ such that $\mu_i(p)$ is a constant for all $i$, then the defender also has no incentive to deviate from $m$; hence $(m,p)$ forms an NE. The main challenge, however, is that such an assignment of $p$ does not always exist for the whole set of nodes. Moreover, there are NEs that do not fully utilize the defense or attack budget as we show below. To characterize the set of NEs, we first prove the following properties satisfied by any NE of the game. For a given strategy $(m,p)$, we define $\mu^*(p) \triangleq \max_i \mu_i(p)$, $\rho^*(m) \triangleq \min_i \rho_i(m)$, $F(p) \triangleq \{i: \mu_i(p) = \mu^*(p)\}$, and $D(m,p) \triangleq \{i \in F: \rho_i(m) = \rho^*(m)\}$. We omit $m$ and $p$ when they are clear from the context. 

\begin{lemma}\label{lemma:range}
In any NE, (1) $m_i \leq \frac{r_i}{r_iw_i+C^A_i}$ and (2) $p_i \geq \frac{C^D_i}{r_iw_i}$.
\end{lemma}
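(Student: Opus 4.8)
The plan is to exploit the fact that both \eqref{sbgame1} and \eqref{sbgame2} are \emph{linear} in each individual coordinate $m_i$ (resp.\ $p_i$), together with the defining property of an NE that each player's action is a best response to the other's. The argument is a ``push to the boundary'' cascade: if one of the two claimed inequalities fails at an NE, then one player strictly prefers to zero out a variable, and given that zero the other player strictly prefers to push the complementary variable to an extreme point of its box, which contradicts the assumed violation. Throughout I would use the standing assumptions $C^A_i,C^D_i,r_i>0$ and $\frac{C^D_i}{r_iw_i}\le 1$ to guarantee that the relevant coefficients have a definite sign and that the boundary point being pushed to is genuinely different from the current value, so that every improvement step is \emph{strict}.

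\textbf{Part (1).} Suppose $(m^\ast,p^\ast)$ is an NE but $m_i^\ast > \frac{r_i}{r_iw_i+C^A_i}$ for some $i$. Then the coefficient $r_i - m_i^\ast(r_iw_i+C^A_i)$ of $p_i$ in the attacker's objective $U_a(m^\ast,\cdot)$ is strictly negative. First I would note that if $p_i^\ast>0$, lowering $p_i$ to $0$ while keeping the other coordinates fixed is feasible --- it only loosens the budget constraint $\sum_j m_j^\ast w_j p_j\le M$ and stays inside the box --- and strictly increases $U_a$, contradicting that $p^\ast$ is a best response; hence $p_i^\ast=0$. Substituting $p_i^\ast=0$ into the defender's objective makes the coefficient of $m_i$ equal to $\mu_i(p^\ast)=r_iw_ip_i^\ast-C^D_i=-C^D_i<0$. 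Since $m_i^\ast>\frac{r_i}{r_iw_i+C^A_i}>0$, lowering $m_i$ (say to $0$) is feasible and strictly increases $U_d(\cdot,p^\ast)$, contradicting that $m^\ast$ is a best response. Hence $m_i^\ast\le\frac{r_i}{r_iw_i+C^A_i}$.

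\textbf{Part (2).} This runs symmetrically. Suppose $p_i^\ast<\frac{C^D_i}{r_iw_i}$ at an NE. Then $\mu_i(p^\ast)=r_iw_ip_i^\ast-C^D_i<0$, so if $m_i^\ast>0$ the defender can strictly improve $U_d(\cdot,p^\ast)$ by reducing $m_i$ to $0$ (feasible, since this only relaxes $\sum_j m_j\le B$), contradicting optimality; hence $m_i^\ast=0$. Now in the attacker's objective the coefficient of $p_i$ is $r_i-m_i^\ast(r_iw_i+C^A_i)=r_i>0$, while $p_i$ has coefficient $m_i^\ast w_i=0$ in the attacker's budget constraint, so it does not appear there at all. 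Since by assumption $p_i^\ast<\frac{C^D_i}{r_iw_i}\le 1$, we have $p_i^\ast<1$, and raising $p_i$ to $1$ is feasible and strictly increases $U_a(m^\ast,\cdot)$, contradicting that $p^\ast$ is a best response. Hence $p_i^\ast\ge\frac{C^D_i}{r_iw_i}$.

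I do not expect a serious obstacle here; the proof is essentially a two-step perturbation argument. The only points requiring care are (a) checking feasibility of each perturbation --- but each one either relaxes a budget constraint or leaves it untouched, and all the target values $0$ and $1$ lie in the respective boxes --- and (b) ensuring the improvement at each step is strict, which is exactly where the assumed violations being \emph{strict} inequalities, together with $C^A_i,C^D_i,r_i>0$ and $\frac{C^D_i}{r_iw_i}\le 1$, are used.
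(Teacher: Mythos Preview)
Your proof is correct and follows essentially the same contradiction cascade as the paper: a violation of one bound forces the opponent to zero out (or max out) the complementary variable, which in turn forces the original variable to an extreme that contradicts the assumed violation. Your version is simply more explicit about the feasibility and strictness of each perturbation step.
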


\begin{proof}
To prove the first property, suppose $m_i > \frac{r_i}{r_iw_i+C^A_i}$. Then $p_i$ must be 0; otherwise the benefit for attacking $i$ becomes negative. This in turn implies that $m_i = 0$ by the assumption that $C^D_i>0$, a contradiction. To prove the second property, suppose $p_i < \frac{C^D_i}{r_iw_i}$. Then we have $\mu_i < 0$, which implies $m_i = 0$ and therefore $p_i = 1$ since $r_i > 0$, a contradiction.
\end{proof}

\begin{lemma}\label{lemma:necessary}
If $(m,p)$ is an NE, we have (see Table~\ref{table_NE}):
\begin{enumerate}
\item $\forall i \not \in F, m_i = 0, p_i = 1, \rho_i = \infty$;
\item $\forall i \in F \backslash D, m_i \in [0,\frac{r_i}{w_ir_i+C^A_i}], p_i = 1$;
\item $\forall i \in D, m_i \in [0,\frac{r_i}{w_ir_i+C^A_i}], p_i \in [\frac{C^D_i}{r_iw_i}, 1]$.
\end{enumerate}
\end{lemma}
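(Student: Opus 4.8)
The plan is to split the two equilibrium conditions — the defender best-responding to $p$, and the attacker best-responding to $m$ — and to establish each structural claim by a short exchange (perturbation) argument, reading the numerical ranges directly off Lemma~\ref{lemma:range}. As a preliminary observation: since $C^A_i>0$, Lemma~\ref{lemma:range}(1) gives $m_i\le\frac{r_i}{r_iw_i+C^A_i}<\frac{1}{w_i}$, so in any NE every coordinate $m_i$ lies strictly below its own cap $\frac{1}{w_i}$; and Lemma~\ref{lemma:range}(2) applied to any node gives $\mu_i=p_ir_iw_i-C^D_i\ge 0$, hence $\mu^*\ge 0$. With this, the $m_i$-ranges in all three items, the bound $p_i\ge\frac{C^D_i}{r_iw_i}$ in item~(3) (for $i\in D$ this also follows from $\mu_i=\mu^*\ge 0$), and the trivial bounds $m_i\ge 0$, $p_i\le 1$ follow at once from Lemma~\ref{lemma:range} and feasibility. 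It remains to prove the three equalities in item~(1) and the equality $p_i=1$ in item~(2).

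Item~(1): fix $i\notin F$, so $\mu_i<\mu^*$. For fixed $p$ the defender's objective $U_d(\cdot,p)=\sum_k m_k\mu_k-\sum_k p_kr_k$ is linear in $m$ on its feasible polytope, so if $m_i>0$ one could pick any $j\in F$ (nonempty, since the maximum defining $\mu^*$ is attained), move a small mass $\delta>0$ from $m_i$ to $m_j$ — feasible because $m_j<\frac{1}{w_j}$ and the sum $\sum_k m_k$ is unchanged — and raise $U_d$ by $\delta(\mu^*-\mu_i)>0$, contradicting that $m$ is a best response; hence $m_i=0$, and then $\rho_i=\frac{r_i}{m_iw_i}=\infty$ since $r_i>0$. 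Finally, with $m_i=0$ node $i$ contributes $m_iw_ip_i=0$ to the attacker's budget constraint and $p_i(r_i-m_i(r_iw_i+C^A_i))=p_ir_i$ to $U_a$; since $r_i>0$, raising $p_i$ to $1$ strictly increases $U_a$ while preserving feasibility, so the attacker's best response has $p_i=1$.

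Item~(2): fix $i\in F\setminus D$. If $m_i=0$ then $p_i=1$ by the same free-node argument as in item~(1); so assume $m_i>0$, hence $\rho_i<\infty$ and $\rho_i>\rho^*$ (because $i\in F$ but $i\notin D$). Since $m_i>0$, $\rho^*=\min_k\rho_k\le\rho_i<\infty$, so $\rho^*$ is attained by some node with positive $m$, which lies in $F$ by item~(1) and hence in $D$; fix such a node $j\in D$, necessarily $j\ne i$, with $m_j>0$ and (Lemma~\ref{lemma:range}(2)) $p_j\ge\frac{C^D_j}{r_jw_j}>0$. Suppose for contradiction $p_i<1$, and shift attack weight from $j$ to $i$: increase $p_i$ by $\delta$ and decrease $p_j$ by $\delta'=\frac{m_iw_i}{m_jw_j}\delta$, which for small $\delta>0$ is feasible and leaves $\sum_k m_kw_kp_k$ unchanged; this changes $U_a$ by $\delta m_iw_i(\rho_i-\rho_j)=\delta m_iw_i(\rho_i-\rho^*)>0$, contradicting that $p$ is the attacker's best response. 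Hence $p_i=1$, which completes the proof.

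The main obstacle is item~(2): one must rule out $p_i<1$ for \emph{every} node of $F\setminus D$, not merely for those strictly above the breakpoint of the attacker's fractional-knapsack best response, since a priori a node of $F\setminus D$ with $\rho_i$ between $\rho^*$ and that breakpoint would receive $p_i<1$. The exchange with a $D$-node above makes explicit why this cannot occur in an NE: a node attaining $\rho^*$ also attains $\mu^*$ (it is in $F$), so it must be attacked with positive probability (Lemma~\ref{lemma:range}(2)), which forces the breakpoint down to $\rho^*$. Everything else reduces to Lemma~\ref{lemma:range} or to perturbing a linear objective along an edge of the feasible polytope.
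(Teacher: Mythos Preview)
Your proof is correct and follows essentially the same approach as the paper's: both arguments use Lemma~\ref{lemma:range} to pin down the numerical ranges and then establish the equalities via exchange/perturbation of mass in the linear objectives $U_d(\cdot,p)$ and $U_a(m,\cdot)$. Your exposition is in fact more explicit than the paper's on the feasibility checks for the exchanges (e.g.\ noting $m_j<\tfrac{1}{w_j}$ from $C^A_j>0$ so the defender's transfer is admissible, and verifying $p_j>0$ via $C^D_j>0$ so the attacker's transfer is admissible), but the logical skeleton is the same.
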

\begin{proof}
We first show that if $m_i >0$ and $m_j >0$, then $\mu_i = \mu_j$. Suppose $\mu_i < \mu_j$. Then it is better to protect $i$ than protecting $j$. Since $m_i>0$, we must have $m_j =  \frac{1}{w_j}>\frac{r_i}{r_iw_i+C^A_i}$ by the assumption that $C^A_i > 0$, a contradiction. It follows that $m_i = 0$ $\forall i \not \in F$ and $m_i \in [0,\frac{r_i}{r_iw_i+C^A_i}]$ $\forall i \in F$. Since when $m_i = 0$, we must have $\rho_i = \infty$, and $p_i = 1$, $p_i = 1, \rho_i = \infty$ $\forall i \not \in F$. It remains to show that $p_i = 1$ for all $i \in F \backslash D$. Assuming $F \backslash D \neq \emptyset$, then we have $\rho_j < \infty$ for $j \in D$, which implies that $m_j > 0$ for $j \in D$. Since $\rho_i < \rho^*$ for $i \in F \backslash D$, it is more beneficial to attack $i$ that any $j \in D$. Since $p_j > 0$ and $m_j > 0$ for $j \in D$, we must have $p_i = 1$.
\end{proof}

\begin{lemma}\label{lemma:order}
If $(m,p)$ forms an NE, then for $i \in D, j \in F \backslash D$ and $k \not \in
F$, we have $r_iw_i - C^D_i \geq r_jw_j - C^D_j > r_kw_k - C^D_k$.
\end{lemma}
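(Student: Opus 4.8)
The plan is to read the ordering off directly from the structural characterization already established in Lemma~\ref{lemma:necessary}, which pins down $p_i$ (or at least bounds it above by $1$) in each of the three regimes $D$, $F\backslash D$, and the complement of $F$. The object that ties the three node types together is $\mu_i(p)=p_ir_iw_i-C^D_i$, whose maximum over $i$ is $\mu^*(p)$ and whose set of maximizers is exactly $F$; every inequality in the statement will come from comparing the realized value $\mu_\ell$ with the ``potential'' value $r_\ell w_\ell-C^D_\ell$ that node $\ell$ would attain at $p_\ell=1$.

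First I would treat $k\notin F$. By Lemma~\ref{lemma:necessary}(1) we have $p_k=1$, so $\mu_k=r_kw_k-C^D_k$; and since $k\notin F$, the definition of $F$ and $\mu^*$ gives $\mu_k<\mu^*$. Next, for $j\in F\backslash D$, Lemma~\ref{lemma:necessary}(2) gives $p_j=1$, hence $\mu_j=r_jw_j-C^D_j$, and $j\in F$ forces this to equal $\mu^*$. Combining the two yields $r_jw_j-C^D_j=\mu^*>\mu_k=r_kw_k-C^D_k$, which is the strict (second) inequality.

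For the first inequality take $i\in D$. By Lemma~\ref{lemma:necessary}(3), $p_i\le 1$, so $\mu_i=p_ir_iw_i-C^D_i\le r_iw_i-C^D_i$; and since $i\in D\subseteq F$, $\mu_i=\mu^*$. Hence $r_iw_i-C^D_i\ge\mu^*$, and combining with the value $\mu^*=r_jw_j-C^D_j$ found above (the claim involving $j$ being vacuous when $F\backslash D=\emptyset$, and likewise for $k$ when $F$ is everything) gives the chain $r_iw_i-C^D_i\ge r_jw_j-C^D_j>r_kw_k-C^D_k$.

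I do not expect a genuine obstacle: once Lemma~\ref{lemma:necessary} is in hand this statement is essentially its corollary. The only point requiring care is the bookkeeping of which constraint on $p_\ell$ is in force in each case — exact equality $p_\ell=1$ for $\ell\notin D$ versus merely $p_\ell\le 1$ for $\ell\in D$. It is precisely this one-sided slack on $D$ that makes the first inequality non-strict while the second is strict, so I would state the three cases of $p_\ell$ explicitly before doing the comparison rather than trying to unify them.
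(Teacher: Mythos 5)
Your proof is correct and follows essentially the same route as the paper's: both arguments read the ordering off from Lemma~\ref{lemma:necessary} by comparing $\mu_\ell = p_\ell r_\ell w_\ell - C^D_\ell$ with its value at $p_\ell = 1$ in each of the three regimes, with the non-strictness on $D$ coming precisely from the slack $p_i \leq 1$. No gaps.
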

\begin{proof}
Since $\mu_i = \mu_j$ for $i \in D$ and $j \in F \backslash D$ by the definitions of $F$ and $D$, and $p_i \leq p_j = 1$ by
Lemma~\ref{lemma:necessary}, we have $r_iw_i - C^D_i \geq \mu_i = \mu_j \geq r_jw_j - C^D_j$. On the other hand, since $\mu_j > \mu_k$ by the
definition of $F$, and $p_j = p_k = 1$ by Lemma~\ref{lemma:necessary}, we have $r_jw_j - C^D_j = \mu_j > \mu_k = r_kw_k - C^D_k$.
\end{proof}

According to the above lemma, to find all the equilibria of the game, it suffices to sort all the nodes by a non-increasing order of $r_iw_i-C^D_i$, and consider each $F_h$ consisting of the first $h$ nodes such that $r_hw_h-C^D_h > r_{h+1}w_{h+1}-C^D_{h+1}$, and each subset $D_k \subseteq F_h$ consisting of the first $k \leq h$ nodes in the list. In the following, we assume such an ordering of nodes. Consider a given pair of $F$ and $D \subseteq F$. By Lemma~\ref{lemma:necessary} and the definitions of $F$ and $D$, the following conditions are satisfied by any NE with $F(p) = F$ and $D(m,p) = D$.
\begin{align}
m_i = 0, p_i = 1, \forall i \not \in F; \label{C1} \\
m_i \in [0,\frac{r_i}{w_ir_i+C^A_i}], p_i = 1, \forall i \in F \backslash D;  \label{C2} \\
m_i \in [0,\frac{r_i}{w_ir_i+C^A_i}], p_i \in [\frac{C^D_i}{r_iw_i}, 1], \forall i \in D; \label{C3}\\
\sum_{i \in F} m_i \leq B, \sum_{i \in F} m_iw_ip_i \leq M; \label{C4}\\
\mu_i = \mu^*, \forall i \in F;\ \ \ \ \ \mu_i < \mu^*, \forall i \not \in F;\label{C5}\\
\rho_i = \rho^*, \forall i \in D;\ \ \ \ \ \rho_i > \rho^*, \forall i \not \in D.\label{C6}
\end{align}

\begin{table}
\centering
\caption{Necessary Conditions for NEs}\label{table_NE}
\renewcommand{\arraystretch}{1.2}
\small{
\begin{tabular}{|c|c|c|c|}
\hline $i \in $ & $D$ & $F \backslash D$ & $\overline{F}$ \\
\hline $m_i$ & $[0,\frac{r_i}{w_ir_i+C^A_i}]$ & $[0,\frac{r_i}{w_ir_i+C^A_i}]$ & 0 \\
\hline $p_i$ & $[\frac{C^D_i}{r_iw_i}, 1]$ & 1 & 1 \\
\hline $\mu_i$ &  $\mu^*$ & $\mu^*$ & $<\mu^*$ \\
\hline $\rho_i$ & $\rho^*$ & $>\rho^*$ & $+\infty$ \\
\hline
\end{tabular}
}
\end{table}

The following theorem provides a full characterization of the set of NEs of the game.
\begin{theorem}
Any pair of strategies $(m,p)$ with $F(p) = F$ and $D(m,p) = D$ is an NE iff it is a solution to one of the following sets of constraints in addition to \eqref{C1} to \eqref{C6}.
\begin{enumerate}
\item $\sum_{i \in F} m_i = B$; $\rho^* = 0$;
\item $\sum_{i \in F} m_i = B$; $\rho^* > 0$; $\sum_{i \in F} m_iw_ip_i = M$;
\item $\sum_{i \in F} m_i = B$; $\rho^* > 0$; $p_i = 1, \forall i \in F$;
\item $\sum_{i \in F} m_i < B$; $\mu^* = 0$; $F = F_N$; $\rho^*=0$;
\item $\sum_{i \in F} m_i < B$; $\mu^* = 0$; $F = F_N$; $\rho^*>0$; $\sum_{i \in F} m_iw_ip_i = M$;
\item $\sum_{i \in F} m_i < B$; $\mu^* = 0$; $F = F_N$; $\rho^*>0$; $p_i = 1, \forall i \in F$.
\end{enumerate}
\end{theorem}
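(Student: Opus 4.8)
The plan is to use the defining property that $(m,p)$ is a Nash equilibrium iff $m$ maximizes the defender's objective in \eqref{sbgame1} against $p$ and, simultaneously, $p$ maximizes the attacker's objective in \eqref{sbgame2} against $m$. The point is that each of these is an easy optimization problem. For fixed $p$, problem \eqref{sbgame1} is a linear program in $m$; its optimal solutions are exactly the ``greedy'' allocations that load the defense budget onto the components with the largest coefficient $\mu_i(p)$, using the whole budget $B$ whenever $\mu^*>0$ and being indifferent (on the components of $F$, zero elsewhere) when $\mu^*=0$. For fixed $m$, problem \eqref{sbgame2} is a fractional knapsack in $p$; its optimal solutions set $p_i=1$ on the components with the highest efficiency $\rho_i(m)$, split the attack budget $M$ on the marginal components, and exhaust $M$ whenever the marginal efficiency is positive. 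Lemmas~\ref{lemma:range}, \ref{lemma:necessary} and \ref{lemma:order} already establish that every NE satisfies \eqref{C1}--\eqref{C6}; the additional content of the theorem is precisely the budget-utilization structure and the sign of $\mu^*$ and $\rho^*$ forced by these two greedy characterizations.

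For the forward (``only if'') direction, let $(m,p)$ be an NE with $F(p)=F$ and $D(m,p)=D$. On the defender's side, Lemma~\ref{lemma:range} gives $p_i\ge C^D_i/(r_iw_i)$, hence $\mu_i\ge 0$ for all $i$ and $\mu^*\ge 0$. If $\mu^*>0$, then, since every $i\in F$ satisfies $m_i\le \frac{r_i}{r_iw_i+C^A_i}<\frac{1}{w_i}$ by \eqref{C2}--\eqref{C3} and $C^A_i>0$, the defender could strictly improve by raising some $m_i$, $i\in F$, unless $\sum_{i\in F}m_i=B$; so the budget is tight. If $\mu^*=0$ and $\sum_{i\in F}m_i<B$, then $F=F_N$: otherwise some $k\notin F$ has $p_k=1$ (Lemma~\ref{lemma:necessary}), so $\mu_k=r_kw_k-C^D_k\ge 0=\mu^*$ by the standing assumption $C^D_k\le r_kw_k$, contradicting $\mu_k<\mu^*$. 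On the attacker's side, Lemma~\ref{lemma:range} gives $m_i\le\frac{r_i}{r_iw_i+C^A_i}$, hence $\rho_i\ge 0$ and $\rho^*\ge 0$; and if $\rho^*>0$ while $\sum_{i\in F}m_iw_ip_i<M$, then any $i\in F$ with $p_i<1$ admits a feasible increase of $p_i$ that strictly improves the attacker's payoff (its marginal contribution is $m_iw_i\rho_i\ge 0$, with equality only if $m_i=0$, in which case the contribution is $r_i>0$), so $p_i=1$ for all $i\in F$ and, by Lemma~\ref{lemma:necessary}, for all $i$. Collecting these: the defense budget is either tight (cases 1--3) or slack with $\mu^*=0$ and $F=F_N$ (cases 4--6); within each group, $\rho^*=0$ gives case 1 (resp.\ 4), $\rho^*>0$ with the attack budget tight gives case 2 (resp.\ 5), and $\rho^*>0$ with all $p_i=1$ gives case 3 (resp.\ 6).

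For the reverse (``if'') direction, suppose $(m,p)$ satisfies \eqref{C1}--\eqref{C6} together with one of the six constraint sets; I show $m$ is a best response to $p$ and $p$ is a best response to $m$. Feasibility is immediate from \eqref{C1}--\eqref{C4}: $0\le m_i\le\frac{r_i}{r_iw_i+C^A_i}<\frac{1}{w_i}$, $\sum_i m_i\le B$, $0\le p_i\le 1$, $\sum_i m_iw_ip_i\le M$. For the defender: if $\mu^*>0$ the constraint set forces $\sum_{i\in F}m_i=B$ and, by \eqref{C1} and \eqref{C5}, $m$ is a greedy full allocation onto the maximizer set $F$, so $\sum_i m_i\mu_i=B\mu^*$, which upper bounds the objective since $\mu_i\le\mu^*$ and $\sum_i m'_i\le B$ for any feasible $m'$; if $\mu^*=0$, then by \eqref{C1} the defense mass is carried only by components with $\mu_i=0$, so $\sum_i m_i\mu_i=0$, again the maximum. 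For the attacker: by \eqref{C6} (with \eqref{C1}--\eqref{C2}) we have $p_i=1$ on all $i\notin D$ and $\rho_i=\rho^*$ exactly on $D$. If $\rho^*=0$, the components off $D$ are precisely those of positive marginal value and are all at $p_i=1$ (affordable since $\sum_{i\notin D}m_iw_i\le M$ by \eqref{C4}), while the $D$-components carry zero marginal value, so our $p$ attains the knapsack optimum; if $\rho^*>0$ with $\sum_i m_iw_ip_i=M$, this is exactly the greedy knapsack solution (all higher-efficiency components at $1$, remainder of $M$ split over $D$); if $\rho^*>0$ with $p\equiv 1$, every component has positive marginal value and is saturated within budget, again optimal. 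Hence $(m,p)$ is a mutual best response.

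The step I expect to be the main obstacle is the disciplined treatment of the two ``indifference'' regimes, $\mu^*=0$ for the defender and $\rho^*=0$ for the attacker, and the bookkeeping needed to verify that the six constraint sets are jointly exhaustive and each compatible with \eqref{C1}--\eqref{C6}: in particular, seeing that $\mu^*=0$ with slack defense budget forces $F=F_N$ relies essentially on $C^D_i\le r_iw_i$, and the ``slack attack budget with $\rho^*>0$ implies $p\equiv 1$'' step must correctly handle components $i\in F$ with $m_i=0$, where $\rho_i=+\infty$ rather than a finite value.
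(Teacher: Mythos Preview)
Your proposal is correct and rests on the same underlying idea as the paper's proof: both exploit that \eqref{sbgame1} is a linear program in $m$ (so an optimal $m$ is any greedy allocation onto the maximal-$\mu_i$ set, with full budget use whenever $\mu^*>0$) and that \eqref{sbgame2} is a fractional knapsack in $p$ (so an optimal $p$ saturates the highest-$\rho_i$ components and exhausts $M$ whenever $\rho^*>0$). Your ``only if'' argument is a cleaner packaging of the paper's: the paper organizes the same deductions as a case analysis on the position of $B$ relative to $\sum_{i\in D}\frac{r_i}{r_iw_i+C^A_i}$ and $\sum_{i\in F}\frac{r_i}{r_iw_i+C^A_i}$ and on whether $D=F$ or $D\subsetneq F$, whereas you derive the budget dichotomy directly from $\mu^*>0\Rightarrow\sum m_i=B$ and $\rho^*>0$ with slack $M\Rightarrow p\equiv 1$. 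The one substantive addition in your write-up is the explicit ``if'' direction, which the paper's proof does not spell out; your verification there (checking that the six constraint families indeed yield mutual best responses via the greedy/LP characterizations) is sound, including the handling of the $m_i=0$ components where the objective coefficient is $r_i>0$ rather than $m_iw_i\rho_i$.
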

\begin{proof}
We first consider the cases when the budget constraint of the defender is tight, i.e., $\sum_{i \in F} m_i = B$ (cases 1-3). Since $m_i \leq \frac{r_i}{w_ir_i+C^A_i}$ in any NE by Lemma~\ref{lemma:range} and $m_i = 0$ for $i$ not in $F$ by Lemma~\ref{lemma:necessary}, we must have $B \leq \sum_{i \in F} \frac{r_i}{w_ir_i+C^A_i}$ in any NE. If $B = \sum_{i \in F} \frac{r_i}{w_ir_i+C^A_i}$, we have $\rho^* = 0$ (\textbf{case 1}). Assume $B < \sum_{i \in F} \frac{r_i}{w_ir_i+C^A_i}$. First consider the case $D = F$. We then have $m_i \leq \frac{r_i}{w_ir_i+C^A_i}, i \in F$. Hence, $\rho^*>0$ since $B < \sum_{i \in F} \frac{r_i}{w_ir_i+C^A_i}$. It follows that $\sum_{i \in F} m_iw_ip_i = M$ (\textbf{case 2}) unless $p_i = 1, \forall i \in F$ (\textbf{case 3}); otherwise, some $p_i, i \in F$ can be increased to get more benefit. Note that case 3 can happen only if $r_iw_i - C^D_i$ is the same for all $i \in F$. Next consider the case $D \subsetneq F$. If $B \in [\sum_{i \in E} \frac{r_i}{w_ir_i+C^A_i}$, $\sum_{i \in F} \frac{r_i}{w_ir_i+C^A_i})$, we again have $\rho^* = 0$ and get case 1, but with extra constraints regarding $i \in F \backslash D$ as required by \eqref{C2} and \eqref{C8}. Otherwise, if $B < \sum_{i \in D} \frac{r_i}{w_ir_i+C^A_i}$, by applying a similar argument as above, we again have $\rho^*>0$ and get case 2 or case 3 depending on whether the attacker's budget constraint is tight or not.

We next consider the cases when $\sum_{i \in F} m_i < B$ (cases 4-6). We first observe that $p_i = \frac{C^D_i}{r_iw_i}, \forall i \in F$, or equivalently, $\mu^* = 0$. Otherwise, if $\mu_i > 0$, $m_i$ can be further increased to reduce the cost due to the fact that $m_i \leq \frac{r_i}{r_iw_i+C^A_i} < \frac{1}{w_i}$ in any NE (by Lemma~\ref{lemma:range} and the assumption that $C^A_i>0$), a contradiction. We then have $F = F_N$ by its definition. Cases 4-6 then follow from a similar argument for cases 1-3 by distinguishing different values of $\rho^*$.
\end{proof}

In the following, NEs that fall into each of the six cases considered above are named as Type 1 - Type 6 NEs, respectively.  The next theorem shows that our game has at least one equilibrium and may have more than one NE.
\begin{theorem}
The attacker-defender game always has a pure strategy Nash Equilibrium, and may have more than one NE of different payoffs to the defender.
\end{theorem}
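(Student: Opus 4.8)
The statement has two halves --- existence of a pure NE, and possible multiplicity of NE values --- and I would treat them separately, in that order.

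\emph{Existence.} The simplest route is to note that the simplified game \eqref{sbgame1}--\eqref{sbgame2} fits the hypotheses of a classical generalized-Nash (social) equilibrium existence theorem of Debreu/Fan type. Indeed $U_d(m,p)$ is \emph{linear} in $m$ and $U_a(m,p)$ is \emph{linear} in $p$, so each player's payoff is continuous and quasiconcave in its own variable; the defender's feasible set $\{m:\ \sum_i m_i\le B,\ 0\le m_i\le 1/w_i\}$ is a fixed nonempty compact convex polytope; and the attacker chooses $p$ in the fixed cube $[0,1]^N$ subject only to the coupling constraint $\sum_i m_iw_ip_i\le M$. The induced correspondence $m\mapsto \Phi(m):=\{p\in[0,1]^N:\ \sum_i m_iw_ip_i\le M\}$ has nonempty (it contains $p=0$), compact, convex values, is upper semicontinuous (closed graph, compact range), and is lower semicontinuous because $p=0$ is a \emph{uniform} Slater point of the coupling constraint --- this is exactly where $M>0$ enters. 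The Debreu social equilibrium theorem then yields a profile $(m^*,p^*)$ with $p^*\in\Phi(m^*)$ at which neither player can improve unilaterally within its feasible set, i.e.\ a pure strategy NE; the classification theorem above additionally identifies which of the six types $(m^*,p^*)$ is. A self-contained alternative, closer to the paper's constructive flavour, is to build the NE from the classification directly: sort the nodes by non-increasing $r_iw_i-C^D_i$ (Lemma~\ref{lemma:order}) and, for each prefix $F=F_h$ and nested subset $D=D_k\subseteq F_h$, solve the small linear system arising from \eqref{C1}--\eqref{C6} plus one of the six budget regimes; a monotonicity / intermediate-value argument on the defender's marginal value $\mu^*$ (as the candidate prefix varies) and then on the attacker's marginal value $\rho^*$ shows that exactly one regime is feasible and produces an NE.

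\emph{Multiplicity.} For the second claim it suffices to exhibit one instance with two NEs of different defender payoffs. The plan is to take $N=2$ with $r_1w_1-C^D_1>r_2w_2-C^D_2$, all $C^A_i,C^D_i>0$ and $C^D_i/(r_iw_i)<1$, and put $p_1':=(r_2w_2-C^D_2+C^D_1)/(r_1w_1)$, which lies strictly between $C^D_1/(r_1w_1)$ and $1$. Choose $B<r_1/(r_1w_1+C^A_1)$ and $M$ in a window of the form $Bw_1p_1'<M<Bw_1$ (further intersected with an open condition keeping the per-node attacker values positive and $\rho_2>\rho_1$), so that both budget constraints are forced to bind. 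Then one checks against the classification that: (i) $(F,D)=(\{1\},\{1\})$ with $m=(B,0)$, $p=(M/(Bw_1),1)$ is a Type-2 NE whose defender payoff equals $\tfrac{Mr_1(Bw_1-1)}{Bw_1}-BC^D_1-r_2$, hence is \emph{strictly decreasing in $M$} since $Bw_1<1$; and (ii) $(F,D)=(\{1,2\},\{1\})$ with $p=(p_1',1)$ and $m_1,m_2>0$ solving the attacker's tight-budget equation $m_1w_1p_1'+m_2w_2=M$ is a Type-2 NE whose defender payoff equals $B(r_2w_2-C^D_2)-p_1'r_1-r_2$, hence is \emph{independent of $M$} (because both coordinates share the common coefficient $\mu^*=r_2w_2-C^D_2$, so $U_d=B\mu^*-p_1'r_1-r_2$). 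A nonconstant and a constant function of $M$ agree at most at one point, so for all but one value of $M$ in the window the two equilibria give the defender distinct payoffs.

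\emph{Main obstacle.} The existence half is essentially routine once the generalized-game viewpoint is adopted; the only nonroutine point there is the lower semicontinuity of $\Phi$, which reduces to the uniform Slater condition $M>0$. The real work is in the multiplicity half: one must check that the parameter window for $(B,M)$ above is nonempty, i.e.\ that the finitely many strict inequalities imposed by \eqref{C1}--\eqref{C6} for \emph{both} equilibria simultaneously --- notably $0<p_1'<1$, $B<r_1/(r_1w_1+C^A_1)$, $\rho_2>\rho_1=\rho^*>0$, and positivity of each $r_i-m_i(r_iw_i+C^A_i)$ --- are jointly satisfiable. This holds for generic parameters, and pinning down an explicit such instance (and verifying it against the six-case classification) is where the care goes.
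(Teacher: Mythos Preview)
Your existence argument via Debreu's social-equilibrium theorem is correct and takes a genuinely different route from the paper. The paper argues constructively: for each prefix $F_h$ of the nodes sorted by $r_iw_i-C^D_i$ it defines a canonical pair $(m^h,p^h)$ with $\sum_{i\le h}m^h_i=B$ and the $\rho_i$ equalized on the prefix, and then runs an intermediate-value sweep over $h$ (and an inner index $h'$) to locate a prefix at which one of Types~1, 2, or 4 is realized. Your abstract-economy argument is shorter and needs only linearity of the payoffs plus the uniform Slater point $p=0$ (this is where $M>0$ enters); the paper's construction is longer but tells you \emph{which} NE type you land in, information the fixed-point approach cannot give. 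The ``self-contained alternative'' you sketch in one sentence is essentially the paper's route.

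For multiplicity, the paper simply writes down an explicit two-node instance ($r_1=r_2=1$, $w_1=2$, $w_2=1$, $C^D_1=1/5$, $C^D_2=4/5$, $C^A_1=1$, $C^A_2=7/2$, $B=1/3$, $M=1/5$) and checks directly that it supports a Type-2 NE at $m=(1/6,1/6)$, $p=(3/20,9/10)$ together with a one-parameter family of Type-1 NEs at $m=(1/3,0)$, $p=(p_1,1)$, $p_1\in[1/5,3/10]$, all with distinct defender payoffs. Your parametric construction --- two Type-2 equilibria, one with defender payoff strictly monotone in $M$ and one $M$-independent because $\mu_1=\mu_2=\mu^*$ --- is more illuminating about \emph{why} coexistence occurs, and the feasibility check you flag as the main obstacle does go through: in the regime $w_1p_1'<w_2$, $B<r_1/(r_1w_1+C^A_1)$, taking $M$ just above $Bw_1p_1'$ forces $m_2\to 0^+$ in your NE~(ii), whence $\rho_2\to\infty>\rho_1>0$ and all bounds in \eqref{C2}--\eqref{C3} are met. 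Note that the paper's coexistence is Type~1/Type~2 rather than your Type~2/Type~2, so the mechanisms differ, but either suffices for the statement.
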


\begin{proof}
To show the first part, for any given index $h \leq N$, we define a pair of strategies $(m^h, p^h)$ as follows. Let $m^h_i = 0, \forall i > h$ and let $\{m^h_i, i \leq h\}$ be the solution to the constraints (1) $\sum_{i \leq h} m^h_i = B$ and (2) $\rho_i$ is a constant for all $i \leq h$; $p^h_i = 1, \forall i > h$ (hence $\mu_{h+1} = r_{h+1}w_{h+1}-C^D_h$), and $\forall i \leq h, p^h_i = \frac{\mu_{h+1}+C^D_i}{r_iw_i}$ if $h < N$, $p^h_i = 0$ otherwise.

We first prove the following claim. For a given $h$, let $h' \leq h$ denote the smallest index such that $r_{h'}w_{h'}-C^D_{h'} = r_hw_h-C^D_h$. Consider two pairs of strategies $(m^h,p^h)$ and $(m^{h'},p^{h'})$. We claim that if $\sum_{i \leq h}m^h_iw_ip^h_i < M$ and $\sum_{i \leq h}m^{h'}_iw_ip^{h'}_i \geq M$, then there is a Type 2 NE respecting $F_h$. Note that by definition, $\sum_{i \leq h}m^h_iw_ip^h_i < M$ is always true when $h = N$.

To prove the claim, we consider another pair of strategies $(m^h, p^{h'})$. If we have $\sum_{i \leq h}m^h_iw_ip^{h'}_i \geq M$, then since $\sum_{i \leq h}m^h_iw_ip^h_i < M$, there must exist $p$ with $p_i = 1, \forall i > h$, $p_h \in [\frac{\mu_{h+1}+C^D_h}{r_hw_h},1]$, and $p_i = \frac{\mu_h+C^D_i}{r_iw_i}, \forall i \leq h$ such that $\sum_{i \leq h}m^h_iw_ip_i = M$. Hence, $(m^h,p)$ is a Type 2 NE. On the other hand, if $\sum_{i \leq h}m^h_iw_ip^{h'}_i < M$, then since $\sum_{i \leq h}m^{h'}_iw_ip^{h'}_i \geq M$, there must exist $m$ with $m_i = 0, \forall i>h$, $m_i \in [0,m^h_i], \forall h' \leq i \leq h$, and $\{m^h_i, i \leq h\}$ be the solution to the constraints (1) $\sum_{i \leq h} m^h_i = B$ and (2) $\rho_i$ is a constant for all $i < h'$, such that $\sum_{i \leq h}m_iw_ip^{h'}_i = M$. We again get a Type 2 NE.

We then prove the theorem. First note that if $B \geq \sum_{i \leq N} \frac{r_i}{w_ir_i+C^A_i}$, then there is a Type 1 or Type 4 NE in $F_N$. Assume $B < \sum_{i \leq N} \frac{r_i}{w_ir_i+C^A_i}$. There is $h<N$ such that $B < \sum_{i \leq h} \frac{r_i}{w_ir_i+C^A_i}$ and $B \geq \sum_{i < h'} \frac{r_i}{w_ir_i+C^A_i}$, where $h'$ is defined as above. If there is an NE with respect to some $F(h''), h'' > h$, we are done. Otherwise, we have $\sum_{i \leq h}m^h_iw_ip^h_i < M$ by the claim. If $\sum_{i \leq h}m^h_iw_ip^{h'}_i \geq M$, there is a Type 2 NE as proved above. Otherwise, consider the pair of strategies $(m', p^{h'})$ where $m'_i = 0, \forall i > h$, $m_i = \frac{r_i}{w_ir_i+C^A_i}, \forall i < h'$, and $\{m^h_i, i \leq h\}$ is the solution to the constraints (1) $\sum_{i \leq h} m^h_i = B$ and (2) $\rho_i$ is a constant for all $i < h'$. If $\sum_{i \leq h}m'_iw_ip^{h'}_i \geq M$, there is Type 2 NE. Otherwise, there must be a Type 1 NE.

To show the second part, consider the following example with two nodes where $r_1 = r_2 = 1, w_1 = 2, w_2 = 1, C^D_1 = 1/5, C^D_2 = 4/5, C^A_1 = 1, C^A_2 = 7/2, B = 1/3$, and $M = 1/5$. It is easy to check that $m = (1/6,1/6)$ and $p = (3/20,9/10)$ is a Type 2 NE, and $m = (1/3,0)$ and $p = (p_1,1)$ with $p_1 \in [1/5,3/10]$ are all Type 1 NEs, and all these NEs have different payoffs to the defender.
\end{proof}

\section{Sequential Game}\label{sec:sequential}
In this section, we study the subgame perfect equilibrium~\cite{game-theory-Osborne} of the Stackelberg game when the defender employs a periodic strategy and the attacker employs an {\it i.i.d.} strategy. 
In the sequential game, the defender first commits to a strategy and makes it public, the attacker then responds accordingly. 
We assume that at $t=0$, the leader (defender) has determined its strategy and the follower (attacker) has learned the defender's strategy and determined its own strategy in response. In addition, the players do not change their strategies thereafter. Our objective is to identify the best sequential strategy for the defender.
We adopt the same assumption in Section~\ref{sec:equilibrium} and then define the subgame perfect equilibrium as follows:

\begin{definition}
\label{def:sequential}
A pair of strategies $(m^{\star},p^{\star})$ is a subgame perfect equilibrium of the sequential game 
if $m^{\star}$ is the optimal solution of
\begin{equation}
\begin{aligned}
\label{snegame1}
\max_{m_i}&\sum_{i=1}^N[m_i(r_i\alpha_ip^{\star}_i-C_i^D)-p^{\star}_ir_i]\\
s.t.\ \ &\sum_{i=1}^N m_i\leq B\\
&0\leq m_i\leq \frac{1}{\alpha_i}, \forall i
\end{aligned}
\end{equation}
where $p^{\star}_i$ is the optimal solution of
\begin{equation}
\begin{aligned}
\label{snegame2}
\max_{p_i}&\sum_{i=1}^Np_i[r_i-m_i(r_i\alpha_i+C_i^A)]\\
s.t.\ \ &\sum_{i=1}^N m_i\alpha_ip_i\leq M\\
&0\leq p_i\leq 1, \forall i
\end{aligned}
\end{equation}
\end{definition}

Note that in a subgame perfect equilibrium, $p^{\star}_i$ is the optimal solution of (\ref{snegame2}), but the defender's best strategy $m^{\star}_i$ is not necessarily optimal with respect to (\ref{snegame1}).
Due to the multi-node setting and the resource constraints, it is very challenging to identify an exact subgame perfect equilibrium strategy for the defender. We first establish several properties about the optimal defense strategy and then propose a dynamic programming based algorithm that finds a nearly optimal defense strategy.


To clearly state the properties, we partion all the nodes into four disjoint sets defined below:

\begin{enumerate}
\item $F=\{i |m_i > 0,\ p_i = 1\}$
\item $D=\{i |m_i > 0,\ 0 < p_i < 1\}$;
\item $E=\{i |m_i > 0,\ p_i = 0\}$;
\item $G=\{i |m_i = 0,\ p_i = 1\}$.
\end{enumerate}
\ifreport
\begin{proof}
It is obvious that $F$, $D$, $E$ and $G$ are disjoint. The three properties follow directly from the structure of the optimal solution to the attacker's problem and the remark made above.
\end{proof}
\fi

We observe that the set $D$ has at most one element since \eqref{snegame2} is a fractional knapsack problem. Let $\rho_i(m_i) \triangleq \frac{r_i-m_i(r_i\alpha_i+C_i^A)}{m_i\alpha_i}$. We use $m_d$ to represent $m_i,\ i\in D$ for simplicity and denote $\rho_d=\rho_d(m_d)$. If $D$ is empty, we pick any node $i$ in $F$ with minimum $\rho_i$ and treat it as a node in $D$.

\begin{lemma}\label{remark:rho}
For all optimal solutions of \eqref{snegame1}-\eqref{snegame2}, we always have $\rho_d\geq 0$
\end{lemma}

\begin{proof}
If $\rho_d<0$, the defender can give a smaller budget to the corresponding node to bring $\rho_d$ down to 0. In any case, the payoffs from nodes in sets $D$ and $E$ are $0$ since the attacker will give up attacking the nodes in sets $D$ and $E$. Thus, the defender has more budget to defend the nodes in sets $F$ and $G$ which brings more payoff. Therefore $\rho_d$ is always greater than or equal to $0$.
\end{proof}

Based on Lemma~\ref{remark:rho}, we only consider non-negative $\rho_d$ in the our analysis and algorithm.

\begin{lemma}\label{lemma:rho}
For any given nonnegative $\rho_d$, an optimal solution  for (\ref{snegame1})-(\ref{snegame2}) satisfies the following properties:
\begin{enumerate}
\item $r_i\alpha_i-C_i^D>0\ \forall i\in F\cup E\cup D$
\item $m_i\leq \overline{m}_i\ \forall i\in F$
\item $m_j=\overline{m}_j\ \forall j\in E$
\item $\overline{m}_i\leq \frac{1}{\alpha_i}\ \forall i$
\item $B-\sum_{i\in E}\overline{m}_i-m_d>0$.
\end{enumerate}
where $\overline{m}_i=\rho^{-1}_i (\rho_d)$
\end{lemma}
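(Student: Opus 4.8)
The plan is to fix the cutoff value $\rho_d\ge 0$, restrict attention to defender strategies $m$ whose induced attacker best response (the fractional knapsack~\eqref{snegame2} described in the remark above) has exactly this cutoff, and characterize which such $m$ is optimal for~\eqref{snegame1}; by Lemma~\ref{lemma:sne1} the nodes then split into $F,D,E,G$ with $\rho_i(m_i)\ge\rho_d$ on $F\cup D$ and $\rho_i(m_i)\le\rho_d$ on $E$. Two elementary observations carry most of the argument. First, $\rho_i(m)=\frac{r_i}{w_i m}-\frac{r_iw_i+C_i^A}{w_i}$ is strictly decreasing in $m$, so it is invertible with $\overline m_i=m_i(\rho_d)=\frac{r_i}{r_iw_i+C_i^A+\rho_d w_i}$, and $\rho_i(m_i)\ge\rho_d\iff m_i\le\overline m_i$. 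Second, a node of $E\cup G$ costs the attacker nothing to handle ($m_iw_ip_i=0$), so shrinking $m_j$ of a node $j\in E$ down toward $\overline m_j$ leaves the attacker's budget usage — hence the cutoff and the rest of the attack profile — untouched.

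With these in hand the ``easy'' properties drop out. Property~4 is the arithmetic bound $\overline m_i=\frac{r_i}{r_iw_i+C_i^A+\rho_d w_i}\le\frac{r_i}{r_iw_i}=\frac{1}{w_i}$, which uses only $C_i^A>0$ and $\rho_d\ge0$. Property~2 is the characterization $i\in F\Rightarrow\rho_i(m_i)\ge\rho_d\Rightarrow m_i\le\overline m_i$, and $m_j\ge\overline m_j$ for $j\in E$ is its mirror image. For $j\in E$ we have $p_j=0$, so $j$ contributes $-m_jC_j^D$ to $U_d$, which strictly increases as $m_j$ is lowered; lowering $m_j$ to $\overline m_j$ keeps $\rho_j(m_j)\le\rho_d$ (so, breaking the tie in the defender's favor, still $p_j=0$), leaves the attacker response unchanged, and only relaxes $\sum_i m_i\le B$ — so any optimal $m$ has $m_j=\overline m_j$, which is Property~3. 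Substituting $m_j=\overline m_j$ on $E$ and $m_d=\overline m_d$ into the budget constraint gives $\sum_{i\in F}m_i\le B-\sum_{j\in E}\overline m_j-m_d$; since every $i\in F$ has $m_i>0$ and $F\ne\emptyset$ in the (nondegenerate) case under consideration, the right-hand side is strictly positive, which is Property~5.

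The substance of the lemma is Property~1, which I would prove by contradiction via a budget-reallocation argument. Suppose $r_iw_i-C_i^D\le 0$ for some $i\in F\cup D\cup E$. Then the coefficient of $m_i$ in $U_d$, namely $r_iw_ip_i-C_i^D\le r_iw_i-C_i^D\le0$, is nonpositive, so defending node $i$ at all — whether to keep it fully attacked, to keep it the marginal (fractionally attacked) node, or to deter it into $E$ — is never paid back by node $i$'s own term; any benefit has to come from the side effects on the attacker, and those can be reproduced more cheaply on a node of $F$ whose coefficient $r_kw_k-C_k^D$ is strictly positive (one exists by Property~5 together with the $\rho$-ordering of Lemma~\ref{lemma:sne1}). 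Concretely, for $i\in F$ one sends $m_i\to0$, pushing $i$ into $G$ where it remains fully attacked but for free, and reinvests the released defense budget in node $k$; for $i\in E$ one drops the deterrence of $i$ and reinvests its $\overline m_i$ units; for the single node in $D$ one shows that relabeling the marginal node (equivalently, choosing a different $\rho_d$) does at least as well. The step I expect to be the main obstacle is making these reallocations rigorous: moving $m$ also changes the attacker's budget usage, hence the cutoff and the induced attack profile, so one must track how both players' budgets shift simultaneously. This bookkeeping is handled by the same greedy/exchange reasoning used in the Nash-equilibrium analysis of Section~\ref{sec:equilibrium} — sort the nodes and push both budgets toward the nodes with larger $r_iw_i-C_i^D$ — and it is the technical heart of the proof.
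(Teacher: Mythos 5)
Your treatment of Properties 2--5 is correct and is essentially the paper's own argument: Property 4 is the arithmetic bound $\overline m_i=\frac{r_i}{r_iw_i+C_i^A+\rho_dw_i}\le\frac{1}{w_i}$ using $C_i^A>0$ and $\rho_d\ge 0$; Properties 2 and 3 follow from the strict monotonicity of $\rho_i(\cdot)$ together with the ordering $\rho_i\ge\rho_d\ge\rho_j$ for $i\in F$, $j\in E$ from Lemma~\ref{lemma:sne1}, plus the observation that once $p_j=0$ the only effect of $m_j$, $j\in E$, is the cost $-m_jC_j^D$; Property 5 is the budget identity given $F\neq\emptyset$. The paper's proof of these four items says no more than you do.

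The gap is in Property 1, and it sits exactly where you flagged it, but your proposed repair for the $E$-case is the step that fails. For a node $i\in E$, the benefit of defending it is \emph{not} a ``side effect on the attacker that can be reproduced more cheaply on another node'': it is the direct saving of $r_i$ (node $i$ is not attacked at all) purchased at cost $\overline m_iC_i^D$. Your exchange ``drop the deterrence of $i$ and reinvest its $\overline m_i$ units'' sends $i$ to $G$ and changes the defender's payoff by $\overline m_iC_i^D-r_i=r_i\bigl(\frac{C_i^D}{r_iw_i+C_i^A+\rho_dw_i}-1\bigr)$, which is strictly \emph{negative} whenever $C_i^D<r_iw_i+C_i^A+\rho_dw_i$ --- a condition fully compatible with $r_iw_i-C_i^D\le 0$ (take $r_iw_i=C_i^D$ with $C_i^A>0$). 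If the defense budget is slack, the freed $\overline m_i$ has nowhere profitable to go and the exchange strictly hurts the defender, so the contradiction never materializes. (This also shows Property 1 is genuinely delicate at the boundary $r_iw_i=C_i^D$, the only case not already excluded by the standing assumption $C_i^D/(r_iw_i)\le 1$ from Section~\ref{sec:equilibrium}: unlike the simultaneous game, deterrence is credible here, so ``the coefficient of $m_i$ in $U_d$ is nonpositive'' does not imply $m_i=0$.) In fairness, the paper's own proof of Property 1 is the single assertion that ``there is no point for the defender to defend such a node,'' which ignores deterrence entirely; you have not missed an argument the paper actually supplies, but the reallocation you sketch is not a valid substitute for one.
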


\begin{proof}
If $r_i\alpha_i-C_i^D \leq 0\ \forall i\in F\cup E\cup D$. there is no point for the defender to defend such node which will only make the payoff even worse due to high defending cost. Thus, all the nodes whose $r_i\alpha_i-C_i^D\leq 0$ are only in set $G$. For $\forall i\in F$, $\rho_i(\overline{m}_i)=\rho_d$ and $\rho_i(m_i)\geq \rho_d$. According to the reverse relationship between $\rho$ and $m_i$, we have $m_i\leq \overline{m}_i$. For $\forall j\in E$, since $\rho_j(\overline{m}_j)=\rho_d$ and $\rho_j(m_j)\leq \rho_d$, $\overline{m}_j$ is actually a lower bound for $m_j$. Setting $m_j=\overline{m}_j$ makes the cost from node $i$, which is $m_iC_i^D$ gets its minimum and so does the whole problem since it also uses the minimum budget from $B$. Therefore, more budget can be allocated for $m_i\ i\in F$ to minimize the cost from the nodes in set $F$. Further, it's easy to check $\overline{m}_i$ is always less than $\frac{1}{\alpha_i}$ for any given nonnegative $\rho_d$. As to the 5th property, if $B-\sum_{i\in E}\overline{m}_i-m_d\leq 0$, there is no budget for nodes in set $F$ and $D$, which means $F$ and $D$ are both empty. According to the greedy method, it only happens when $M=0$ which violates our assumption. Therefore, $B-\sum_{i\in E}\overline{m}_i-m_d>0$.
\end{proof}

\begin{lemma}\label{lemma:rhoF}
For any nonnegative $\rho_d$, there exists an optimal solution  for (\ref{snegame1})-(\ref{snegame2}) such that $\forall i\in F$, there is at most ONE $m_i<\overline{m}_i$ and all the other $m_i=\overline{m}_i$. 
\end{lemma}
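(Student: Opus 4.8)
The plan is to fix the value of $\rho_d$ and reduce the defender's problem \eqref{snegame1}--\eqref{snegame2} to a two-dimensional fractional knapsack problem over the variables $\{m_i:i\in F\}$, and then invoke Lemma~\ref{lemma:2dknapsack}. Throughout, I work with a fixed nonnegative $\rho_d$ together with a candidate partition $(F,D,E,G)$ as in Lemma~\ref{lemma:sne1} (recall $|D|\le 1$); it suffices to establish the claimed structure for each such choice, since the optimum is obtained by ranging over them, and by Remark~\ref{remark:rho} we may restrict to $\rho_d\ge 0$. By Lemma~\ref{lemma:rho}, in an optimal solution consistent with this partition we must have $m_i=0$ for $i\in G$, $m_j=\overline m_j$ for $j\in E$, and $m_d=m_d(\rho_d)$ pinned down, with $r_iw_i-C_i^D>0$ for every $i\in F\cup D\cup E$. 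Hence the only free variables are $\{m_i:i\in F\}$, each confined to the box $0\le m_i\le \overline m_i$ (Lemma~\ref{lemma:rho}(2)), and by the definition of $F$ we substitute $p_i=1$ there.

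The second step is to rewrite the defender's objective as a linear function of $\{m_i:i\in F\}$ with strictly positive coefficients. The contributions of the nodes in $G$ and $E$ are constants once $\rho_d$ is fixed. The contribution of the node in $D$ equals $p_d r_d(m_dw_d-1)-m_dC_d^D$, which depends on the remaining variables only through $p_d$; when $\rho_d>0$ the attacker exhausts its budget, so $p_d=(M-\sum_{i\in F}m_iw_i)/(m_dw_d)$, and since $m_dw_d<1$ by Lemma~\ref{lemma:rho}(4), this term is an affine and \emph{increasing} function of $\sum_{i\in F}m_iw_i$. Therefore, up to an additive constant, the defender maximizes $\sum_{i\in F}\bigl(r_iw_i-C_i^D+\lambda w_i\bigr)m_i$ for some $\lambda\ge 0$, and each coefficient is positive by Lemma~\ref{lemma:rho}(1). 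The degenerate cases $\rho_d=0$ and $D=\emptyset$ are treated the same way but are simpler, since the $D$-term then contributes nothing.

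The third step identifies the constraints and concludes. The defender's budget gives $\sum_{i\in F}m_i\le B-\sum_{j\in E}\overline m_j-m_d=:B'$, which is a genuine (positive, by Lemma~\ref{lemma:rho}(5)) capacity, and consistency of the partition with the attacker's greedy response forces $\sum_{i\in F}m_iw_i\le M$; the matching lower bound $\sum_{i\in F}m_iw_i\ge M-m_dw_d$ is never active at an optimum because the objective is nondecreasing in $\sum_{i\in F}m_iw_i$ (and if it cannot be met the partition is infeasible and is discarded). Together with the box constraints $0\le m_i\le \overline m_i$, this is exactly a two-dimensional fractional knapsack: maximize a nonnegative linear form subject to two linear capacities and variable upper bounds. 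By Lemma~\ref{lemma:2dknapsack} there is an optimal solution with at most two fractional coordinates, i.e. at most two $i\in F$ with $0<m_i<\overline m_i$; every other $m_i$ equals $0$ or $\overline m_i$. A coordinate with $m_i=0$ can be moved from $F$ into $G$ without changing $\rho_d$, the attacker's spending on $F$, or feasibility (then $\rho_i=+\infty$ and $p_i=1$), so within $F$ all but at most two nodes satisfy $m_i=\overline m_i$, which is the claim.

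The main obstacle is the reduction to \emph{exactly two} linear constraints. One must check carefully that the coupling through the single fractionally attacked node $D$ enters the objective linearly (via $p_d$) rather than as an extra inequality, and that the lower bound it induces on $\sum_{i\in F}m_iw_i$ is inactive at the optimum; one must also cleanly dispatch the boundary cases $\rho_d=0$, $D=\emptyset$, and a non-binding attacker budget, and verify that relabeling zero-$m_i$ nodes from $F$ to $G$ preserves optimality. Once the defender's residual problem is cast as a clean $2$-D fractional knapsack, the bound of two fractional values follows directly from Lemma~\ref{lemma:2dknapsack}.
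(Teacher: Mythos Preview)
Your proposal is correct and follows the same core reduction as the paper: fix $\rho_d$ and the partition, pin down $m_i$ for $i\in E\cup D\cup G$ via Lemma~\ref{lemma:rho}, cast the residual problem over $\{m_i:i\in F\}$ as a fractional knapsack with two linear capacities, and invoke Lemma~\ref{lemma:2dknapsack}. The paper organizes the argument slightly differently: it splits on whether $p=1$ or $p<1$, obtaining a genuine $2$-D knapsack in the first case and, after substituting $p$ into the objective, a $1$-D knapsack in the second---so when $D$ is genuinely nonempty it actually gets the sharper bound of at most \emph{one} fractional $m_i$. Your unified treatment substitutes $p_d$ throughout and retains $\sum_{i\in F}m_iw_i\le M$ (coming from $p_d\ge 0$) as the second capacity, which is tidier but forgoes that sharper intermediate conclusion; since the lemma only asserts ``at most two,'' this costs nothing. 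Your final relabeling step---moving any $i\in F$ with $m_i=0$ into $G$---addresses a point the paper leaves implicit but which is needed to match the exact wording of the lemma (``all the other $m_i=\overline m_i$'' rather than merely ``at most two fractional'').
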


The proof of Lemma~\ref{lemma:rhoF} is in Section~\ref{subsec:sgpropertyproof}. Lemmas~\ref{remark:rho} - \ref{lemma:rhoF} establish the foundation for the following key result about the optimal defense strategy of (\ref{snegame1})-(\ref{snegame2}).
\begin{proposition}\label{proposition:dp}
For any nonnegative $\rho_d$, there exists an optimal solution $\{m_i\}_{i=1}^n$ such that
\begin{enumerate}
\item $\forall i\in F$, there is at most one $m_i<\overline{m}_i$ and all the other $m_i=\overline{m}_i$;
\item $m_d=\overline{m}_d$
\item $\forall i\in E$, $m_i=\overline{m}_i$;
\item $\forall i\in G$, $m_i=0$.
\end{enumerate}
\end{proposition}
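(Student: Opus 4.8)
The plan is to obtain Proposition~\ref{proposition:dp} as a consolidation of the structural facts already established, with essentially no new computation. Fix a nonnegative value of $\rho_d$; by Remark~\ref{remark:rho} this loses nothing, since any solution with $\rho_d<0$ is dominated by one with $\rho_d=0$. For this fixed $\rho_d$, Lemma~\ref{lemma:sne1} tells us that the nodes of any candidate (subgame perfect) solution split into the four disjoint sets $F$, $D$, $E$, $G$ with $|D|\le 1$, and Lemma~\ref{lemma:rho} records the basic sign and magnitude constraints that are valid for nonnegative $\rho_d$. The four asserted properties are then read off one by one.

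First I would dispatch the easy items. Property~(4) is immediate: $i\in G$ means $m_i=0$ by the definition of $G$ in Lemma~\ref{lemma:sne1}. Property~(3) is exactly Lemma~\ref{lemma:rho}(3), which gives $m_j=\overline{m}_j$ for every $j\in E$. Property~(2) is forced by the definitions: $\overline{m}_d$ is defined as $m_d(\rho_d)$ with $m_d(\cdot)$ the inverse of $\rho_d(\cdot)$, while $\rho_d$ was taken to be $\rho_d(m_d)$, so applying the inverse gives $m_d=\overline{m}_d$. (If $D=\emptyset$ we invoke the convention, stated just before Lemma~\ref{lemma:rho}, of promoting the node of $F$ with the smallest $\rho_i$ to play the role of the $D$-node, so this remains well defined.) Property~(1) is precisely the statement of Lemma~\ref{lemma:rhoF}.

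The one point that deserves care --- and the step I expect to be the real (if modest) obstacle --- is verifying that a \emph{single} optimal solution can be taken to satisfy all four properties simultaneously, rather than four different solutions each satisfying one. This is handled by recalling how Lemma~\ref{lemma:rhoF} is proved: once the set partition and $\rho_d$ are fixed, the defender's problem~\eqref{snegame1}--\eqref{snegame2} reduces to the knapsack program~\eqref{snegame4} in the variables $\{m_i\}_{i\in F}$ alone, with the $E$-coordinates already pinned at $\overline{m}_i$, the $G$-coordinates at $0$, and $m_d$ pinned at $\overline{m}_d$. Hence any optimal solution of that reduced program automatically obeys~(2),~(3),~(4), and, invoking Lemma~\ref{lemma:2dknapsack} (at most two fractional values in a $2$-D fractional knapsack, and only one when the attacker's budget binds strictly, i.e.\ $p<1$), also obeys~(1). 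Ranging over the finitely many admissible set partitions consistent with the fixed $\rho_d$ and selecting the one achieving the defender's optimum, we obtain an optimal solution with the claimed structure, which is all the proposition asserts. This structural normal form is exactly what makes the ensuing dynamic program efficient: it reduces the search to a choice of $\rho_d$, of the set partition, and of the at most two fractional $F$-nodes.
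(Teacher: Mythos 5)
Your proposal is correct and follows essentially the same route as the paper, which states Proposition~\ref{proposition:dp} without a separate proof precisely because it is the consolidation of Lemmas~\ref{lemma:sne1}, \ref{lemma:rho}, \ref{lemma:2dknapsack}, and \ref{lemma:rhoF} that you carry out. Your extra care in checking that one solution satisfies all four properties simultaneously (via the reduction to the knapsack program~\eqref{snegame4} with the $D$-, $E$-, and $G$-coordinates pinned) is exactly the implicit argument the paper relies on.
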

We denote the node whose $m_i<\overline{m}_i$ in the first property of Proposition~\ref{proposition:dp} as node $f$ and its defending frequency as $m_f$. Based on Proposition~\ref{proposition:dp}, we can easily compute the value of $m_i$ for each node (except $m_f$) after the set allocation is fixed.
Also, we can explicitly list the defender's payoff, defender's budget usage and attacker's budget usage by putting each node into different sets as shown in Table~\ref{tbl:setallocation}.
\begin{table}[!t]
\centering
\caption{Nodes in Different Sets with Given $\rho_d$}
\renewcommand{\arraystretch}{1.05}
\label{tbl:setallocation}
\small{
\begin{tabular}{|c|c|c|c|}
\hline \multicolumn{1}{|c|}{} & \multicolumn{1}{|c|}{$F$} & \multicolumn{1}{|c|}{$E$} & \multicolumn{1}{|c|}{$G$}\\
\hline
Defender's & \multirow{2}{*}{$\overline{m}_i(r_i\alpha_i-C_i^D)-r_i$} & \multirow{2}{*}{$-\overline{m}_i C_i^D$} & \multirow{2}{*}{$-r_i$}\\
payoff&&&\\
\hline
Defender's & \multirow{2}{*}{$\overline{m}_i$} & \multirow{2}{*}{$\overline{m}_i$} & \multirow{2}{*}{$0$}\\
budget usage&&&\\
\hline
Attacker's & \multirow{2}{*}{$\overline{m}_i \alpha_i$} & \multirow{2}{*}{$0$} & \multirow{2}{*}{$0$}\\
budget usage&&&\\
\hline
\end{tabular}
}
\end{table}
For the fractional node, its $m_i$ can be computed using linear programming when all the other $m_i$ have been determined. We use dynamic programming to determine the optimal set allocation.

From the discussion above, we propose the following algorithm to the defender's problem (see Algorithm~\ref{alg:snefinal}). The algorithm iterates over all possible node $d$ in set $D$ and all possible node $f$ with fractional assignment in set $F$. We first compute a special case when set $G$ is empty (line 2). In this case, the defender's optimal strategy can be obtained by solving (\ref{eqa:nog}) based on Proposition~\ref{proposition:dp}.
\begin{equation}\label{eqa:nog}
\begin{aligned}
Val(d,f)=&\max_{\rho_d,p_i,m_f}\sum_{i\neq f} \overline{m}_i(p_ir_i\alpha_i-C_i^D)\\
&-p_ir_i+m_f(r_f\alpha_f-C_f^D)-r_f\\
s.t.\  &\sum_{i\neq f}\overline{m}_i+m_f\leq B,\ \ \sum_{i\neq f}p_i\overline{m}_i\alpha_i+m_f\alpha_f\leq M\\
       \rho_d=&\frac{r_i-\overline{m}_i(r_i\alpha_i+C_i^A)}{\overline{m}_i\alpha_i}\geq 0,\ 0\leq p_i \leq 1,\ m_f\geq 0
\end{aligned}
\end{equation}
The algorithm then iterates over nonnegative $\rho_d$ with a step size $\rho_{step}$ (line 4). Given $\rho_d, d, f$, the best set allocation (together with $m_i$ for all $i$) are determined using dynamic programming as explained below. 

For any given $\rho_d$, $d$ and $f$ , we compute $\overline{m}_i$ for all $i$ (line 5). Let $SEQ(i,b,m,d,f,ind)$ denote the maximum payoff of the defender considering only node $1$ to node $i$ (excluding nodes $d$ and $f$), for a given defender's budget $b \in [0,B]$ and an attacker's budget $m \in [0,M]$ . The parameter $ind$ is a boolean variable that indicates whether we can put nodes in set $E$ arbitrarily. If $ind$ is $True$, any node (except nodes $d$ and $f$) can be in set $E$. Otherwise, a node $i$ can be allocated to set $E$ only if $r_i-\overline{m}_i(r_i\alpha_i+C_i^A)\leq 0$. 
The value of $SEQ(i,b,m,d,f,ind)$ is determined recursively. If node $i$ is either $d$ or $f$, we simply set $SEQ(i,b,m,d,f,ind) = SEQ(i-1,b,m,d,f,ind)$. Otherwise, we have the following recurrence equation, where the three cases refer to putting node $i$ in sets $F$, $E$ and $G$,  respectively.\ifreport based on Table~\ref{tbl:setallocation}:\fi
\begin{equation}
\begin{aligned}
&SEQ(i,b,m,d,f,ind)=\\
&\max\begin{cases}SEQ(i-1,b-\overline{m}_i,m-\alpha_i\overline{m}_i,d,f,ind)\\ \hspace{20ex} +\overline{m}_i(r_i\alpha_i-C_i^D)-r_i\\
                   SEQ(i-1,b-\overline{m}_i,m,d,f,ind)-\overline{m}_iC_i^D\\
                   SEQ(i-1,b,m,d,f,ind)-r_i
\end{cases}
\end{aligned}
\end{equation}
We have the following boundary conditions:
\begin{enumerate}
\item The recursion $SEQ$ will return $-\infty$ when $i>0$ and (i) $m<0$, or (ii) $b<0$, or (iii) $m=0$ and $ind=False$;

\item $SEQ(0,b,m,d,f,True)$ returns the solution to the following problem (i.e., the total payoffs contributed by nodes $d$ and $f$):
\begin{equation}
\label{residualbudget}
\begin{aligned}
\max_{m_f} &\  m_f(r_f\alpha_f-C_f^D)-r_f+\overline{m}_d(pr_d\alpha_d-C_d^D)-pr_d\\
s.t.\ \  & m_f+\overline{m}_{d}\leq b\\
& m-\overline{m}_d\alpha_d\leq m_f\alpha_f \leq m\\
& m_f\leq \overline{m}_f\\
& p=\frac{m-m_f\alpha_f}{\alpha_d\overline{m}_{d}}
\end{aligned}
\end{equation}
\item Similarly, $SEQ(0,b,m,d,f,False)$ returns the solution to the following problem:
\begin{equation}
\label{residualbudget2}
\begin{aligned}
\max_{m_f} &\ m_f(r_f\alpha_f-C_f^D)-r_f+\overline{m}_d(r_d\alpha_d-C_d^D)-r_d\\
s.t.\ \  & m_f+\overline{m}_{d}\leq b\\
& m_f\alpha_f \leq m-\alpha_d\overline{m}_{d}\\
& m_f\leq \overline{m}_f
\end{aligned}
\end{equation}
Note that if the constraints in (\ref{residualbudget}) or (\ref{residualbudget2}) define an empty set for $m_f$, $SEQ$ simply returns $-\infty$.


\end{enumerate}
\begin{algorithm}
\caption{Sequential Strategy for Defender}
\label{alg:snefinal}
\begin{algorithmic}[1]
\For{$d,f\leftarrow\ 1\ to\ n$}
\State Solve (\ref{eqa:nog}) to obtain $Val(d,f)$
\State $\rho_{max}\leftarrow \rho: \sum_{i=1}^n \alpha_im_i(\rho)= M$
\For{$\rho_d \leftarrow$ $0$ to $\rho_{max}$ with step size $\rho_{step}$}
\State $\overline{m}_i\leftarrow m_i(\rho_d)$ for all $i$
\State $val'_{d,f,\rho_d}\leftarrow SEQ(n,B,M,d,f,True)$
\State $val''_{d,f,\rho_d}\leftarrow SEQ(n,B,M,d,f,False)$
\EndFor
\State $P_{dp}(d,f)\leftarrow \max_{\rho}\{val'_{d,f,\rho_d},val''_{d,f,\rho_d}\}$
\EndFor
\State $P_{alg}\leftarrow \max_{d,f}\{P_{dp}(d,f),Val(d,f)\}$
\end{algorithmic}
\end{algorithm}

Algorithm~\ref{alg:snefinal} computes the optimal solution by searching over all combinations of $d$, $f$ and $\rho_d$. For any given combination, the dynamic program actually finds all the solutions that satisfy Proposition~\ref{proposition:dp}, meaning that $P_{dp}(d,f)$ returns the optimal defense strategy under given $d$, $f$ and $\rho_d$ (line 9). Therefore, $P_{alg}$ is the maximum payoff that the defender can achieve (line 11). For the dynamic program, we round the input before running $SEQ(n,B,M,d,f,ind)$, since the recursion may never stop without rounding. Denote $\delta$ as the rounding parameter, we have $\overline{m}_i\leftarrow \left\lfloor \frac{\overline{m}_i}{\delta} \right\rfloor$, $\alpha_i\leftarrow \left\lfloor \frac{\alpha_i}{\delta} \right\rfloor$ for all $i$ and $B \leftarrow \left\lfloor \frac{B}{\delta} \right\rfloor$, $M \leftarrow \left\lfloor \frac{M}{\delta} \right\rfloor$. By setting $\delta$ small enough, Algorithm~\ref{alg:snefinal} can find a strategy that is arbitrarily close to the subgame perfect equilibrium strategy of the defender. Formally, we can establish the following result.

\begin{restatable}{theorem}{primethmrounding}
\label{thm:rounding}
Let $|P_{alg}|$ denote the defender's cost obtained by Algorithm~\ref{alg:snefinal} and $|P^\star|$ the optimal cost. Given $\rho_{step}$ and the rounding parameter $\delta$, We have $\frac{|P_{alg}|}{|P^\star|}\leq 1+(\rho_{step}+\delta)O(N)$.
\end{restatable}
Please find the detailed proof in Section Theorem~\ref{thm:rounding} provides the performance guarantee of Algorithm~\ref{alg:snefinal} showing the trade-off between performance and the time complexity. Based on Theorem~\ref{thm:rounding}, we have the following corollary. 

\begin{corollary}\label{cor:complexity}
By setting both $\rho_{step}$ and $\delta$ with $O(\frac{1}{N})$, Algorithm~\ref{alg:snefinal} can achieve a near-optimal solution and its complexity is $O(N^5 BM)$
\end{corollary}
\ifreport
\begin{proof}
The complexity of the dynamic recursion $SEQ(\cdot)$ is $O(NBM)$ since $SEQ(\cdot)$ iterates all $B$, $M$ and $N$. Algorithm~\ref{alg:snefinal} iterates $d$ and $f$ with $\rho_{step}$ and rounding parameter $\delta$. Thus, its complexity is $O(\frac{N^3 BM}{\rho_{step}\cdot \delta})$. By setting both $\rho_{step}$ and $\delta$ with $O(\frac{1}{N})$, the total complexity of Algorithm~\ref{alg:snefinal} is $O(N^5 BM)$.
\end{proof}
\fi

\section{Numerical Results}\label{sec:numerical}
In this section, we present numerical results for our game models. For the illustrations, we assume that all the attacking times $\alpha_i$ are deterministic as in Sections~\ref{sec:sequential}.
We study the payoffs of both the attacker and the defender and their strategies in both Nash Equilibrium (two-node setting) and subgame perfect equilibrium (both two-node and five-node settings), and study the impact of various parameters including resource constraints $B$, $M$, and the unit value $r_i$.

\subsection{Simulations with Selected Parameters}\label{subsec:experiments1}
\begin{figure}
\captionsetup[subfigure]{justification=centering}
    \centering
      \begin{subfigure}{0.24\textwidth}
        \includegraphics[width=\textwidth]{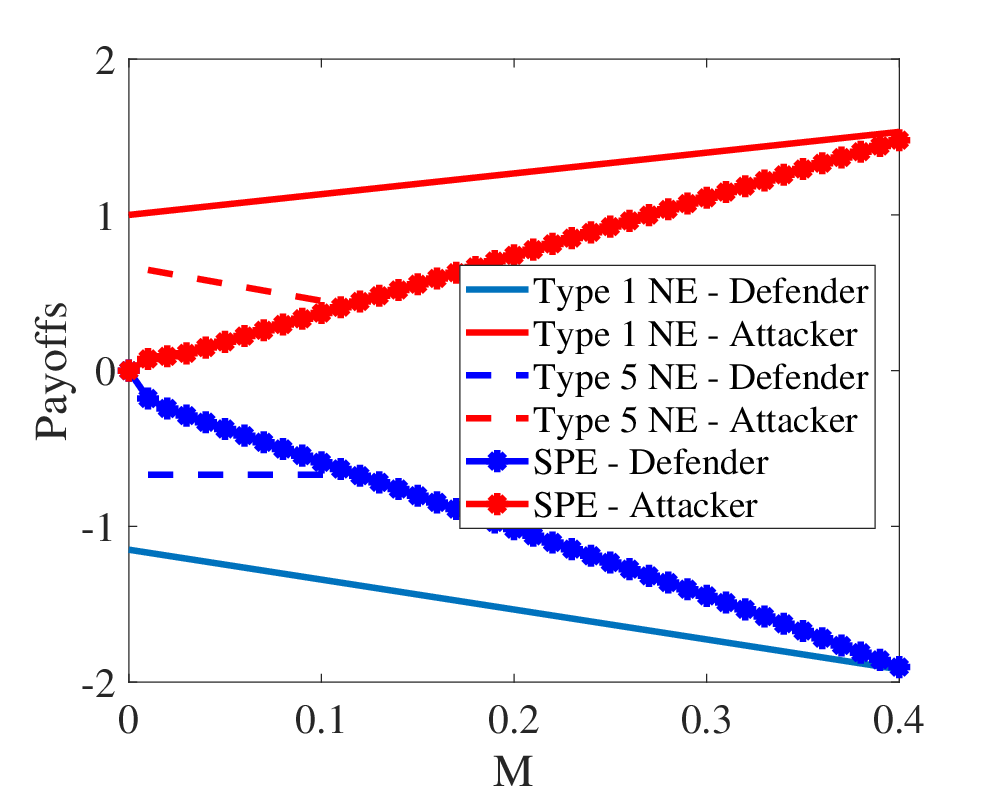}
          \caption{Payoffs with varying $M$}
          \label{fig:subfig1:a}
      \end{subfigure}
      \begin{subfigure}{0.24\textwidth}
        \includegraphics[width=\textwidth]{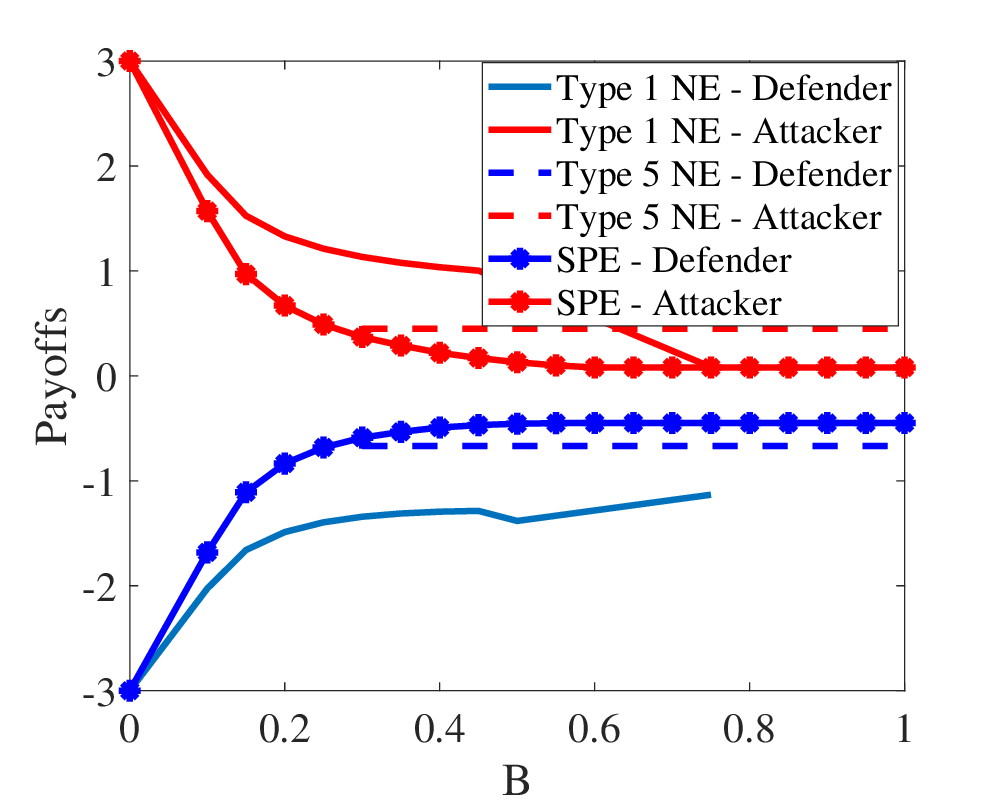}
          \caption{Payoffs with varying $B$}
          \label{fig:subfig1:b}
      \end{subfigure}
    \caption{The effects of varying resource constraints on payoffs. In both figures, $r_1=2, r_2=1, w_1=1.7, w_2=1.6, C_1^D=0.5, C_2^D=0.6, C_1^A=1, C_2^A=1.5$, $B=0.3$ in (a), and $M=0.1$ in (b)}
    \label{fig:subfig}
\end{figure}

We first study the impact of the resource constraints $M$ and $B$ on the player's payoffs in a two-node setting. The results are given in Figure~\ref{fig:subfig}, where we have plotted both Type 1 and Type 5 NEs~\footnote{There are also Type 2 NEs, which are omitted for the sake of clarify.} and subgame perfect equilibria. A Type 5 NE only occurs when $M$ is small as shown in Figure~\ref{fig:subfig1:a}, while Type 1 NE appears when $B$ is small as shown in Figure~\ref{fig:subfig1:b}, which is expected since $B$ is fully utilized in a Type 1 NE while $M$ is fully utilized in a Type 5 NE. When the defense budget $B$ becomes large, the summation of $m_i$ does not necessarily equal to $B$ and thus Type 1 NEs disappear. Similarly, Type 5 NEs disappear for large attack budget $M$. In both figures, the subgame perfect equilibria always bring the defender higher payoffs compared with Nash Equilibria, which is expected.

\subsection{Simulations with Real-World data} %
\begin{figure}
\captionsetup[subfigure]{justification=centering}
    \centering
      \begin{subfigure}{0.24\textwidth}
        \includegraphics[width=\textwidth]{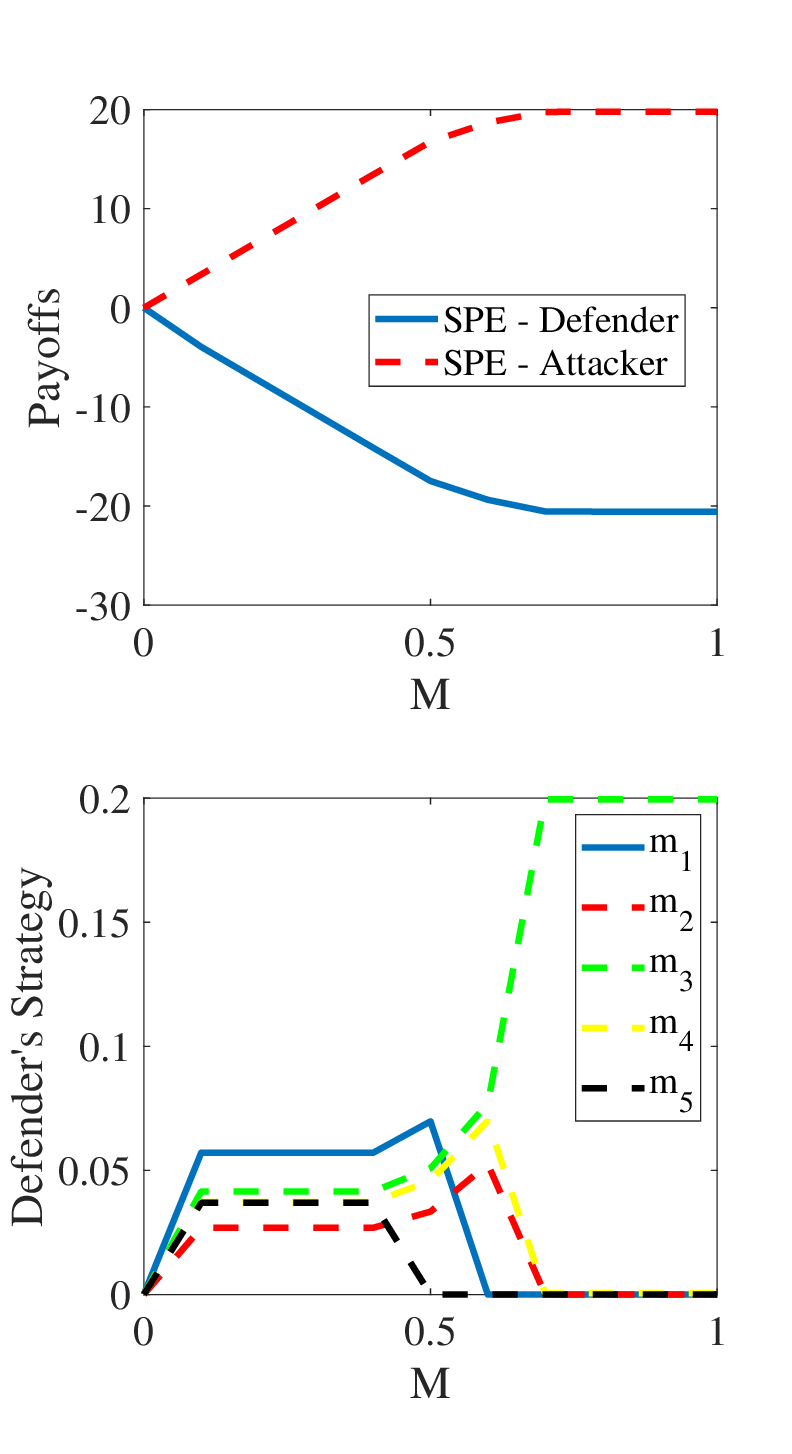}
          \caption{Payoffs and strategies with varying $M$}
          \label{fig:sgchangewithm2}
      \end{subfigure}
      \begin{subfigure}{0.24\textwidth}
        \includegraphics[width=\textwidth]{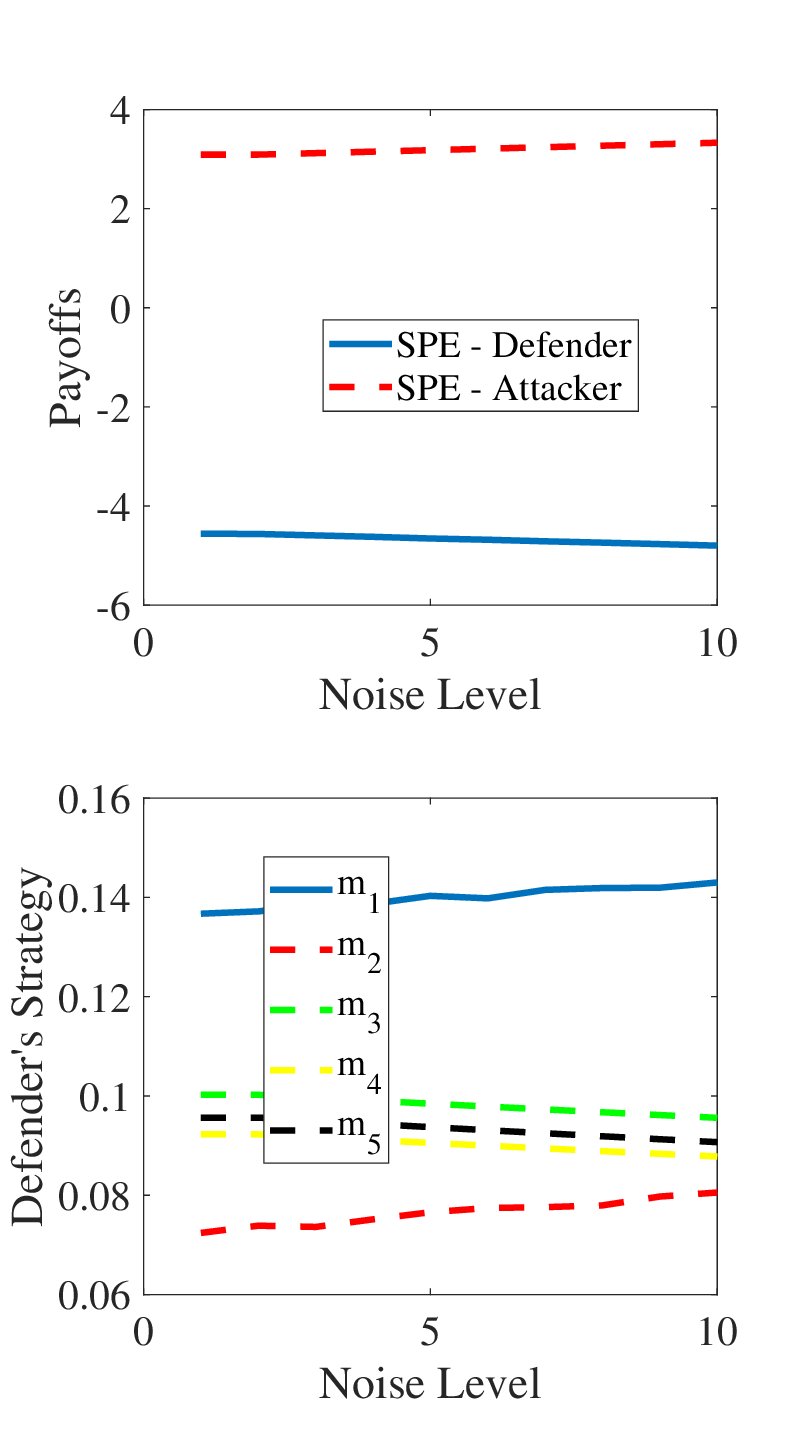}
          \caption{Payoffs and strategies with varying noise level}
          \label{fig:sgchangewithr2}
      \end{subfigure}
    \caption{The effects of varying resource constraint $M$ and unit value $r_i$, where $B=0.2$ in (a), $B=0.5$ and $M=0.3$ in (b). In (b), a random noise level is added to $r_1$ and $r_2$.}
\end{figure}

To have a better understanding of the performance of Algorithm~\ref{alg:snefinal}, we consider a five-node setting and use real-word data from the National Vulnerability Database (NVD)\cite{nvdurl}. We pick five vulnerability incidents about IoT devices revealed by the database. For each incident, we use their Impact Score (the potential impact of the vulnerability), Exploitability Score (how vulnerable the thing itself is to attack), Vulnerability Base Score (how critical the vulnerability is) and Attack Complexity (Low or High) \cite{CVSSuser1,CVSSuser2,CVSSuser3,CVSSuser4,CVSSuser5} as an approximation of the node value, attacking time, defending cost and attacking cost respectively. Specifically, we set node values as $r=[5.9\ 3.6\ 5.9\ 5.2\ 3.6]$. For the attacking times, since higher Exploitability Score means easier attack, we take the reciprocal and set $\alpha=[10/3.9\ 10/2.8\ 10/2.8\ 10/2.8\ 10/3.9]$ where the constant $10$ is used for normalization. The Vulnerability Base Score is utilized to approximate the defending cost by setting $C^D = [9.8\ 6.5\ 8.8\ 8.1\ 7.5]/3$, while the attacking cost is set to $2$ if the Attack Complexity is High and $1$ otherwise. We study the effects of varing $M$ and $r$ in Figure~\ref{fig:sgchangewithm2}.

In Figure~\ref{fig:sgchangewithm2}, the attacker's budget $M$ varies from $0$ to $1$ and the defending budget $B=0.2$. When $M=0$, the defender can set $m_i$ for all $i$ to arbitrary small (but positive) values, so that the attacker is unable to attack any node, leading to a zero payoff for both players. As $M$ becomes larger, the attacker's payoff increases, while the defender's payoff decreases, and the defender tends to defend the nodes with higher values more frequently, as shown in Figure~\ref{fig:sgchangewithm2}(lower). The defender gradually stop protecting low value nodes and move all the resources to defend node $3$. Note that the defending frequency for node $3$ is smaller than that for node 1 at the beginning. This is because when $M$ is small, the attacker attacks each node with a very small probability, thus the defender can protect all the nodes at the same time to prevent big loss. Since node $1$ and $3$ have the same unit value while $\alpha_1 < \alpha_3$, the defender protects node $1$ more frequently. However, when the attacker has enough resources to attack each node with a much higher probability, it is not beneficial for the defender to protect other nodes except node $3$ since it has the highest node value and attacking time.

In Figure~\ref{fig:sgchangewithr2}, we fixed $r_3$ through $r_5$ and increase $r_1$ and $r_2$ by adding a random noise uniformly distributed between $[\text{noise\_level}-1, \text{noise\_level}] * 0.1$. We vary the $\text{noise\_level}$ from $1$ to $10$. As shown in the figure, $m_1$ and $m_2$ keep increasing when the noise level becomes larger, while the defending frequencies for all other three nodes decrease due to limited defending resources, which indicates that the defender should protect the nodes with higher values more frequently in the subgame perfect equilibrium.

\begin{table}[!t]
\centering
\caption{Running Time Improvement}
\renewcommand{\arraystretch}{1.05}
\label{tbl:runtime}
\small{
\begin{tabular}{|c|c|c|}
\hline \multicolumn{1}{|c|}{No. of nodes} & \multicolumn{1}{|c|}{Algorithm~\ref{alg:snefinal}} & \multicolumn{1}{|c|}{Algorithm 1 in~\cite{zhang2015game}}\\
\hline
2 & 12.1 sec & 31.4 sec\\
\hline
3 & 131.5 sec & 410.8 sec\\
\hline
4 & 1036.3 sec & 3710.8 sec\\
\hline
5 & 2117.9 sec & 9261.3 sec\\
\hline
6 & 3.45 hours & 24 - 26 hours\\
\hline
\end{tabular}
}
\end{table}

Table~\ref{tbl:runtime} compares the running time of Algorithm~\ref{alg:snefinal} and that of the corresponding algorithm in our conference paper~\cite{zhang2015game}. All experiments are conducted on a desktop with 4-Core Intel i5-4670K CPU @ 3.40GHz and Matlab R2019a. The same simulation setting as in Figure~\ref{fig:sgchangewithm2} is applied with fixed $M=0.2$. We observe that Algorithm~\ref{alg:snefinal} is much faster than the original algorithm in our conference paper and the improvement is more significant in a larger setting. 

\ignore{
We make following observations from our numerical result:
\begin{enumerate}
\item The defender may prefer different types of NEs under different scenarios, so does the attacker.
\item Subgame perfect Equilibrium always bring more benefits to the defender compared to Nash Equilibrium.
\item The defender should protect the nodes with higher values more frequently in both Nash and subgame perfect Equilibrium.
\item In addition to the defender's constraint $B$, the attacker's resource constraint $M$ can also have a significant impact on the defender's behavior.
\item When $M$ is large, protecting high value nodes more frequently and giving up several low value nodes is more beneficial for the defender compared to defending all the nodes with low frequency.
\end{enumerate} }

\section{Conclusion}\label{sec:conclusion}
In this paper, we propose a two-player non-zero-sum game for protecting a system of multiple components against a stealthy attacker where the defender's behavior is fully observable and both players have strict resource constraints. We prove that periodic defense and non-adaptive $i.i.d.$ attack are a pair of best-response strategies with respect to each other in the space of both adaptive and non-adaptive strategies. For this pair of strategies, we characterize the set of Nash Equilibria of the game, and show that there is always one (and maybe more) equilibrium, for the case when the attack times are deterministic.
We further study the sequential game where the defender first publicly announces its strategy and design an algorithm that can identify a strategy that is arbitrarily close to the subgame perfect equilibrium strategy for the defender. We also provide a full analysis of the algorithm performance and its complexity guarantee.

\bibliographystyle{abbrv}
\bibliography{ref}

\begin{thebibliography}{10}

\bibitem{fireeyereport2018}
\url{https://www.fireeye.com/content/dam/collateral/en/mtrends-2018.pdf}.

\bibitem{nvdurl}
\url{https://nvd.nist.gov}.

\bibitem{CVSSuser1}
\url{https://nvd.nist.gov/vuln/detail/CVE-2019-10891}.

\bibitem{CVSSuser2}
\url{https://nvd.nist.gov/vuln/detail/CVE-2019-9461}.

\bibitem{CVSSuser3}
\url{https://nvd.nist.gov/vuln/detail/CVE-2019-13267}.

\bibitem{CVSSuser4}
\url{https://nvd.nist.gov/vuln/detail/CVE-2019-11061}.

\bibitem{CVSSuser5}
\url{https://nvd.nist.gov/vuln/detail/CVE-2019-16159}.

\bibitem{APT}
Advanced persistent threat.
\newblock \url{http://en.wikipedia.org/wiki/Advanced_persistent_threat}.

\bibitem{CDorked}
{ESET and Sucuri Uncover Linux/Cdorked.A: The Most Sophisticated Apache
  Backdoor}.
\newblock
  \url{http://www.eset.com/int/about/press/articles/article/eset-and-sucuri-uncover-linuxcdorkeda-apache-webserver-backdoor-the-}\\\url{most-sophisticated-ever-affecting-thousands-of-web-sites/},
  2013.

\bibitem{RSA}
{A. Coviello}.
\newblock {Open letter to RSA customers, March 17, 2011}.
\newblock \url{http://www.rsa.com/node.aspx?id=3872}.

\bibitem{Basar-network-security-book}
T.~Alpcan and T.~Ba\c{s}ar.
\newblock {\em {Network Security: A Decision and Game-Theoretic Approach}}.
\newblock Cambridge University Press, 2010.

\bibitem{Tambe-AAMAS13}
B.~An, M.~Brown, Y.~Vorobeychik, and M.~Tambe.
\newblock {Security Games with Surveillance Cost and Optimal Timing of Attack
  Execution}.
\newblock In {\em International Conference on Autonomous Agents and Multiagent
  Systems (AAMAS)}, 2013.

\bibitem{basak2018initial}
A.~Basak, J.~{\v{C}}ern{\`y}, M.~Gutierrez, S.~Curtis, C.~Kamhoua, D.~Jones,
  B.~Bo{\v{s}}ansk{\`y}, and C.~Kiekintveld.
\newblock An initial study of targeted personality models in the flipit game.
\newblock In {\em International Conference on Decision and Game Theory for
  Security}, pages 623--636. Springer, 2018.

\bibitem{Gauss}
B.~Bencs\'{a}th, G.~P\'{e}k, L.~Butty\'{a}n, and M.~F\'{e}legyh\'{a}zi.
\newblock {The Cousins of Stuxnet: Duqu, Flame, and Gauss}.
\newblock {\em Future Internet}, 4:971--1003, 2012.

\bibitem{graph-apt}
K.~D. Bowers, M.~E.~V. Dijk, A.~Juels, A.~M. Oprea, R.~L. Rivest, and
  N.~Triandopoulos.
\newblock Graph-based approach to deterring persistent security threats.
\newblock US Patent 8813234, 2014.

\bibitem{flipit-2}
K.~D. Bowers, M.~van Dijk, R.~Grifﬁn, A.~Juels, A.~Oprea, R.~L. Rivest, and
  N.~Triandopoulos.
\newblock {Defending Against the Unknown Enemy: Applying FLIPIT to System
  Security}.
\newblock In {\em Conference on Decision and Game Theory for Security
  (GameSec)}, 2012.

\bibitem{gueye2012towards}
A.~Gueye, V.~Marbukh, and J.~C. Walrand.
\newblock Towards a metric for communication network vulnerability to attacks:
  A game theoretic approach.
\newblock In {\em International Conference on Game Theory for Networks}, pages
  259--274. Springer, 2012.

\bibitem{Tambe-IJCAI13}
A.~X. Jiang, A.~D. Procaccia, Y.~Qian, N.~Shah, and M.~Tambe.
\newblock {Defender (Mis)coordination in Security Games}.
\newblock In {\em International Joint Conference on Artificial Intelligence
  (IJCAI)}, 2013.

\bibitem{Tambe-JAIR11}
D.~Korzhyk, Z.~Yin, C.~Kiekintveld, V.~Conitzer, and M.~Tambe.
\newblock {Stackelberg vs. Nash in Security Games: An Extended Investigation of
  Interchangeability, Equivalence, and Uniqueness}.
\newblock {\em Journal of Artificial Intelligence Research}, 2011.

\bibitem{interdependent-security}
H.~Kunreuther and G.~Heal.
\newblock Interdependent security.
\newblock {\em Journal of Risk and Uncertainty}, 26(2-3), 2003.

\bibitem{FlipThem}
A.~Laszka, G.~Horvath, M.~Felegyhazi, and L.~Butty\'{a}n.
\newblock Flipthem: Modeling targeted attacks with flipit for multiple
  resources.
\newblock In {\em Conference on Decision and Game Theory for Security
  (GameSec)}, 2014.

\bibitem{asymmetric-model}
A.~Laszka, B.~Johnson, and J.~Grossklags.
\newblock {Mitigating Covert Compromises: A Game-Theoretic Model of Targeted
  and Non-Targeted Covert Attacks}.
\newblock In {\em Conference on Web and Internet Economics (WINE)}, 2013.

\bibitem{leslie2015threshold}
D.~Leslie, C.~Sherfield, and N.~P. Smart.
\newblock Threshold flipthem: When the winner does not need to take all.
\newblock In {\em Conference on Decision and Game Theory for Security
  (GameSec)}, pages 74--92. Springer, 2015.

\bibitem{leslie2017multi}
D.~Leslie, C.~Sherfield, and N.~P. Smart.
\newblock Multi-rate threshold flipthem.
\newblock In {\em European Symposium on Research in Computer Security
  (ESORICS)}, pages 174--190. Springer, 2017.

\bibitem{Basar-network-security-survey}
M.~H. Manshaei, Q.~Zhu, T.~Alpcan, and T.~Ba\c{s}ar.
\newblock {Game Theory Meets Network Security and Privacy}.
\newblock {\em ACM Computing Surveys}, 2012.

\bibitem{Tambe-AAAI13}
T.~H. Nguyen, R.~Yang, A.~Azaria, S.~Kraus, and M.~Tambe.
\newblock {Analyzing the Effectiveness of Adversary Modeling in Security
  Games}.
\newblock In {\em AAAI Conference on Artificial Intelligence (AAAI)}, 2013.

\bibitem{FlipIt-behavior}
A.~Nochenson and J.~Grossklags.
\newblock {A Behavioral Investigation of the FlipIt Game}.
\newblock In {\em Workshop on the Economics of Information Security (WEIS)},
  2013.

\bibitem{nochenson2013behavioral}
A.~Nochenson, J.~Grossklags, et~al.
\newblock A behavioral investigation of the flipit game.
\newblock In {\em Proceedings of the 12th Workshop on the Economics of
  Information Security (WEIS)}, page~93, 2013.

\bibitem{game-theory-Osborne}
M.~J. Osborne and A.~Rubinstein.
\newblock {\em A Course in Game Theory}.
\newblock The MIT Press, 1994.

\bibitem{Tambe-AAMAS08}
P.~Paruchuri, J.~P. Pearce, J.~Marecki, M.~Tambe, F.~O. {n}ez, and S.~Kraus.
\newblock {Playing games for security: an efficient exact algorithm for solving
  Bayesian Stackelberg games}.
\newblock In {\em International Conference on Autonomous Agents and Multiagent
  Systems (AAMAS)}, 2008.

\bibitem{Tambe-security-game}
M.~Tambe.
\newblock {\em {Security and Game Theory: Algorithms, Deployed Systems, Lessons
  Learned}}.
\newblock Cambridge University Press, 2011.

\bibitem{flipit}
M.~van Dijk, A.~Juels, A.~Oprea, and R.~L. Rivest.
\newblock {FlipIt: The Game of ``Stealthy Takeover''}.
\newblock {\em Journal of Cryptology}, 26(4):655--713, 2013.

\bibitem{wang2017non}
S.~Wang, F.~Liu, and N.~Shroff.
\newblock Non-additive security games.
\newblock In {\em Thirty-First AAAI Conference on Artificial Intelligence},
  2017.

\bibitem{wang2017security}
S.~Wang and N.~Shroff.
\newblock Security game with non-additive utilities and multiple attacker
  resources.
\newblock {\em Proceedings of the ACM on Measurement and Analysis of Computing
  Systems}, 1(1):13, 2017.

\bibitem{zhang2015game}
M.~Zhang, Z.~Zheng, and N.~B. Shroff.
\newblock A game theoretic model for defending against stealthy attacks with
  limited resources.
\newblock In {\em International Conference on Decision and Game Theory for
  Security}, pages 93--112. Springer, 2015.

\bibitem{zheng2017reset}
Z.~Zheng, N.~B. Shroff, and P.~Mohapatra.
\newblock When to reset your keys: Optimal timing of security updates via
  learning.
\newblock In {\em AAAI Conference on Artificial Intelligence (AAAI)}, pages
  3679--3685, 2017.

\end{thebibliography}
\section{Appendix}\label{sec:appendix}
\subsection{Proof of Lemma~\ref{lemma:attackerlemma2}}\label{subsec::proofofattacklemma}
\begin{proof}
In order to get the attacker's best responses against any defender's deterministic strategies, we can divide \eqref{generalattacker4} into $N*L$ sub-optimization problems
\begin{equation}
\begin{aligned}
\label{generalattacker7}
\min_{W_{i,k}}& \frac{E[\min(W_{i,k}+\alpha_{i,k},X_{i,k})] r_i+P(W_{i,k}<X_{i,k}) C_i^A}{T}\\
s.t.&\frac{E[\min(W_{i,k}+\alpha_{i,k},X_{i,k})]-E[\min(W_{i,k},X_{i,k})]}{T}\leq M_{i,k}
\end{aligned}
\end{equation}
where $\sum_{i=1}^{N}\sum_{k=1}^{L_i} M_{i,k}=M$ and $M_{i,k}$ can be arbitrary positive number. Note that we consider the equivalent minimization problem by taking the negative of the target function of (\ref{generalattacker1}) and omitting the constant part. We claim that, the optimal solution to (\ref{generalattacker7}) is to allocate as much budget as possible to $P(W_{i,k}=0)$, that is
\begin{equation}
\label{solution1}
W_{i,k}^*=\begin{cases}
0 &\text{    w.p.}\ p_{i,k}^*\\
\geq X_{i,k}  &\text{    w.p.}\ 1-p_{i,k}^*
\end{cases}
\end{equation}
where $p_{i,k}^*=\min(1,\frac{M_{i,k}T}{E[\min(\alpha_{i,k},X_{i,k})]})$ if $r_i (E[\min(\alpha_{i,k},X_{i,k})]-X_{i,k})+C_i^A < 0$, and $p_{i,k}^*=0$ otherwise.

Since $M_{i,k}$ is any number such that $\sum_{i=1}^{N}\sum_{k=1}^{L_i} M_{i,k}=M$, the optimal solution of \eqref{generalattacker4} also satisfies the same structure of \eqref{solution1}. We then prove our claim. For simplicity, we assume that $W_{i,k}$ is a discrete r.v., and without loss of generality, it has the following p.m.f
\begin{equation}\label{wequation}
W_{i,k}=\begin{cases}
0 & \text{w.p. $p_0$}\\
v_i & \text{w.p. $p_i,\ i=1\cdots n$}\\
\geq X_{i,k} & \text{w.p. $1-\sum_{j=0}^n{p_j}$}\\
\end{cases}
\end{equation}
where $n\in \mathbb{N}$ such that $0<v_1<v_2<\ldots <v_n<X_{i,k}$. The following proof can be adapted to the continuous $W_{i,k}$ as well by replacing sums with integrals and p.m.f with p.d.f.

Putting \eqref{wequation} into \eqref{generalattacker7},
attacker's problem can then be converted to the following form
{\allowdisplaybreaks
\begin{equation}
\label{lemma3 5}
\begin{aligned}
\min \sum_{j=0}^n p_j(r_i [E[\min(v_j+\alpha_{i,k},X_{i,k})]-X_{i,k}]+C_i^A)+X_{i,k} r_i
\end{aligned}
\end{equation}}
with two constraints: $\sum_{j=0}^n p_j E[\min(\alpha_{i,k},X_{i,k}-v_i)] \leq M_{i,k}T$ and $\sum^n_{j=0} p_j \leq 1$. where $v_0=0$.

Let $J(\{p_0,...,p_n\})$ denote the objective function in~\eqref{lemma3 5}. Since $r_i (E[\min(\alpha_{i,k},X_{i,k})]-X_{i,k})+C_i^A < r_i (E[\min(v_j+\alpha_{i,k},X_{i,k})]-X_{i,k})+C_i^A$, if $r_i (E[\min(\alpha_{i,k},X_{i,k})]-X_{i,k})+C_i^A \geq 0$, $J(\{p_0,...,p_n\})$ is minimized by setting $p_j = 0, \forall j=0,...,n$, which implies $W_{i,k}\geq X_{i,k}$ w.p.1. Such condition describes the case that even if the attacker attacks the node immediately after it is recovered, its reward is still less than 0. Therefore, the attacker never attacks. If $r_i (E[\min(\alpha_{i,k},X_{i,k})]-X_{i,k})+C_i^A <0$, we claim that the optimal solution is to allocate as much budget $M_{i,k}T$ as possible to $p_0$, that is, we set all $p_j=0$, $1\leq j\leq n$, and $p_0=\min(1,\frac{M_{i,k}T}{E[\min(\alpha_{i,k},X_{i,k})]})$. This is clearly true if $r_i (E[\min(v_j+\alpha_{i,k},X_{i,k})]-X_{i,k})+C_i^A \geq 0$. Therefore, it suffices to consider the case when $r_i (E[\min(\alpha_{i,k},X_{i,k})]-X_{i,k})+C_i^A < r_i (E[\min(v_j+\alpha_{i,k},X_{i,k})]-X_{i,k})+C_i^A < 0$.

To prove the claim, consider an optimal solution $\{p_0, p_1, ..., p_n\}$ to~\eqref{lemma3 5}. We show that if $p_0 < \min(1,\frac{M_{i,k}T}{E[\min(\alpha_{i,k},X_{i,k})]})$, then we can find another optimal solution $\{p'_0, p'_1, ..., p'_n\}$ such that $p'_0 > p_0$. We distinguish the following two cases:\\

\noindent Case 1: $p_0 E[\min(\alpha_{i,k},X_{i,k})] + \sum_{j=1}^n p_j E[\min(\alpha_{i,k},X_{i,k}-v_i)] < M_{i,k}T$. Then by the optimality of $\{p_0, p_1, ..., p_n\}$ and the assumption that $r_i (E[\min(v_j+\alpha_{i,k},X_{i,k})]-X_{i,k})+C_i^A<0$, we must have $\sum^n_{j=0} p_j = 1$. Let $j \geq 1$ denote an index such that $p_j>0$. Then there must exist a small amount $\triangle p > 0$ such that $p'_0 = p_0 + \triangle p, p'_j = p'_j - \triangle p, p'_k = p_k, \forall k \neq 0$ and $k \neq j$ is again a feasible solution to~\eqref{lemma3 5}. We further have
{\allowdisplaybreaks
\begin{align*}
&J(\{p_0,...,p_n\}) - J(\{p'_0,...,p'_n\})\\
&=\triangle p(r_i[E[\min(v_j+\alpha_{i,k},X_{i,k})]-X_{i,k}]+C_i^A)\\
&\ \ \ -\triangle p(r_i[E[\min(\alpha_i,X_{i,k})]-X_{i,k}]+C_i^A)\\
&=\triangle p r_i(E[\min(v_j+\alpha_{i,k},X_{i,k})]-E[\min(\alpha_{i,k},X_{i,k})])\\
&\geq 0
\end{align*}}
\noindent Case 2: $p_0 E[\min(\alpha_{i,k},X_{i,k})] + \sum_{j=1}^n p_j E[\min(\alpha_{i,k},X_{i,k}-v_i)] = M_{i,k}T$. Again let $j\geq 1$ denote an index such that $p_j>0$. Then there must exist a small amount $\triangle M > 0$ such that $p'_0 = p_0 +\frac{\triangle M}{E[\min(\alpha_{i,k},X_{i,k})]}, p'_j = p_j - \frac{\triangle M}{E[\min(\alpha_{i,k},X_{i,k}-v_j)]}, p'_k = p_k, \forall k \neq 0$ and $k \neq j$ is a feasible solution to~\eqref{lemma3 5}. We further have
{\allowdisplaybreaks
\begin{align*}
&J(\{p_0,...,p_n\}) - J(\{p'_0,...,p'_n\})\\
&=\frac{\triangle M (r_i[E[\min(v_j+\alpha_{i,k},X_{i,k})]-X_{i,k}]+C_i^A)}{E[\min(\alpha_{i,k},X_{i,k}-v_j)]}\\
&\ \ \ -\frac{\triangle M (r_i[E[\min(\alpha_{i,k},X_{i,k})]-X_{i,k}]+C_i^A)}{E[\min(\alpha_{i,k},X_{i,k})]}\\
&=\frac{\triangle M}{E[\min(\alpha_{i,k},X_{i,k}-v_j)]}(r_iv_j-r_iX_{i,k}+C_i^A)\\
&\ \ \ -\frac{\triangle M}{E[\min(\alpha_{i,k},X_{i,k})]}(-r_iX_{i,k}+C_i^A)\\
&\geq 0
\end{align*}}
\end{proof}
\vspace{-4ex}
\subsection{Proof of Theorem~\ref{thm:markovattacker}}\label{subsec:markovattakcerproof}
\begin{proof}
When the attacker's strategy is an ergodic Markov chain, the $p_{i,k}$'s time-average distribution is the same as its steady state distribution. Therefore the defender's problem (\ref{generaldefender1}) can be transferred to the following
\begin{equation}
\begin{aligned}
\label{markovdefender2}
&\max_{\{X_{i,k}\},L_i}\lim_{T\rightarrow \infty}E\bigg[ \sum_{i=1}^N\bigg(-\frac{L_i C_i^D+Tr_i}{T}\\
                      &+\frac{(\sum_{k=1}^{L_i} E[p_{i,k}]\min(\alpha_{i,k},X_{i,k})+(1-E[p_{i,k}])X_{i,k})\cdot r_i}{T}\bigg) \bigg] \\
\end{aligned}
\end{equation}
with the same resource constraint in \eqref{eqa:defenderconstraint} where the expectation in the numerator is with respect to the steady-state distribution of $p_{i,k}$. We find that \eqref{markovdefender2} is the same as \eqref{generaldefender1} if we set
\begin{equation}
W_{i,k}^\star=\begin{cases}
0\ \ \ &w.p.\ E[p_{i,k}]\\
\infty\ \ \ &w.p.\ 1-E[p_{i,k}]
\end{cases}
\end{equation}
Here, $E[p_{i,k}]$ is the expected value of $p_{i,k}$'s steady state distribution. Therefore, based on Lemma~\ref{lemma:general} and Theorem~\ref{theorem:defender1}, we know that the periodic strategy is defender's best response.
\end{proof}
\vspace{-2ex}
\subsection{Proof of Theorem~\ref{thm:defendermarkov}}\label{subsec:markovproof}
\begin{proof}
For simplicity, we assume there is only one node and the attacking time $\alpha_{i,k}\ \forall i,k$ is deterministic. (We omit all the subscript $i$ in this proof since there is only one node and use $\alpha$ to represent $\alpha_{i,k}\ \forall k$). The defender's Markovian strategy has two states $x_1$ and $x_2$ referring the two defending periods whose transition probabilities are as follows: $P(X_{k+1}=x_2 | X_{k}=x_1)=u$ and $P(X_{k+1}=x_1 | X_{k}=x_2)=v$. Let $\pi_1$ and $\pi_2$ represent the probability that $X_k=x_1$ and $X_k=x_2$ in steady state, respectively. We have $\pi_1=\frac{v}{u+v}$ and $\pi_2=\frac{u}{u+v}$. Since the attacker can observe the defender' defending period, the attacking strategy may depend on the defender's state (the previous defending period). 
Let $p_1$ denote the attacking probability when the attacker observes the defender using $X_1$ in the previous defense move, and $p_2$ as the attacking probability for $X_2$. 

We compute the average payoff for the attacker per defense move. Given the defender uses $x_1$ in the previous defense move, the expected payoff for the attacker is $S_{X_{k-1}=x_1}=[(1-u)\cdot(x_1-\alpha)p_1+u\cdot(x_2-\alpha)p_1]\cdot r-p_1C^A$. If the defender uses $x_2$ in the previous defense move, the attacker's expected payoff is $S_{X_{k-1}=x_2}=[v p_2(x_1-\alpha)+(1-v)p_2(x_2-\alpha)]\cdot r+p_2C^A $. Here, we assume $x_1\geq \alpha$ and $x_2\geq \alpha$. (The defender has no incentive to set $x_1$ or $x_2$ smaller than $\alpha$). Further, since the Markov chain is time reversible, we also have $P(X_{k-1}=x_2 | X_{k}=x_1)=u$ and $P(X_{k-1}=x_1 | X_{k}=x_2)=v$. For attacker's budget constraint, we have
\begin{displaymath}
\begin{aligned}
&\frac{E[\sum_{k=1}^{L_i}\min(W_k+\alpha,X_k)-\min(W_k,X_k)]}{T}\\
=&\pi_1\cdot E[\min(W_k+\alpha,x_1)-\min(W_k,x_1)|X_k=x_1]\\
+&\pi_2\cdot E[\min(W_k+\alpha,x_2)-\min(W_k,x_2)|X_k=x_2]\\
=&\pi_1[(1-u)p_1\alpha+u p_2\alpha]+\pi_2[(1-v)p_2\alpha+v p_1\alpha]\\
=&(\pi_1 p_1+\pi_2 p_2)\alpha
\end{aligned}
\end{displaymath}
Then, the attacker's optimization problem becomes
\begin{equation}\label{eqa:markovattacker}
\begin{aligned}
\max_{p_1,p_2}&\ \pi_1S_{X_{k-1}=x_1}+\pi_2S_{X_{k-1}=x_2}\\
&s.t.\ \pi_1 p_1+\pi_2 p_2\leq M/\alpha\\
& 0\leq p_1,p_2\leq 1
\end{aligned}
\end{equation}
Since \eqref{eqa:markovattacker} is a fractional knapsack problem, it's easy to show that setting $p_1=p_2$ is not optimal in general,  meaning that the {\it i.i.d.} strategy is NOT the attacker's optimal response against Markovian defending strategy.

\end{proof}
\subsection{Proof of Lemma~\ref{lemma:rhoF}}\label{subsec:sgpropertyproof}
\begin{proof}
Suppose the set allocation and $\rho_d$ are fixed, which means $m_d$ and $\overline{m}_i\ \forall i$ are also fixed. According to Lemma~\ref{lemma:rho}, we can now convert (\ref{snegame1})-(\ref{snegame2}) to the following problem:
\begin{equation}
\label{snegame4}
\begin{aligned}
\max_{m_i,i\in F}&\sum_{i\in F} [m_i(r_i\alpha_i-C_i^D)-r_i]- \sum_{i\in G}r_i-\sum_{i\in E}\overline{m}_iC_i^D\\
            &+m_d(pr_d\alpha_d-C_i^D)-pr_d\\
\end{aligned}
\end{equation}
with constraints: $\sum_{i\in F}m_i\leq B-\sum_{i\in E}\overline{m}_i-m_d$, $\sum_{i\in F}\alpha_im_i+p\alpha_dm_d\leq M$ and $0\leq m_i\leq \overline{m}_i\ \forall i\in F$.
where $p=\min\{1,\frac{M-\sum_{i\in F}\alpha_im_i}{\alpha_dm_d}\}$.\\

Case 1: If $\frac{M-\sum_{i\in F}\alpha_im_i}{\alpha_dm_d}\leq 1$, we put $p=\frac{M-\sum_{i\in F}\alpha_im_i}{\alpha_dm_d}$ back into the target function of (\ref{snegame4}) and convert it to
\begin{equation}
\label{snegame5}
\begin{aligned}
\max_{m_i,i\in F}&\sum_{i\in F} [m_i(r_i\alpha_i-C_i^D)-r_i]- \sum_{i\in G}r_i-\sum_{i\in E}\overline{m}_iC_i^D\\
            &+\frac{M-\sum_{i\in F}\alpha_im_i}{\alpha_dm_d}r_d(\alpha_dm_d-1)-m_d C_d^D\\
\end{aligned}
\end{equation}
with constraints: $\sum_{i\in F}m_i\leq B-\sum_{i\in E}\overline{m}_i-m_d$ and $0\leq m_i\leq \overline{m}_i\ \forall i\in F$.

It is easy to see that (\ref{snegame5}) is a fractional knapsack problem. Thus, there is at most one fractional variable which means at most one $m_i<\overline{m}_i$.\\

Case 2: If $\frac{M-\sum_{i\in F}\alpha_im_i}{\alpha_dm_d}>1$, the attacker's budget is not fully utilized and all $p^{\star}_i$ in (\ref{snegame1}) equal to $1$. Thus, the sets $D$ and $E$ are empty. Now suppose there exist two nodes $j$ and $k$ in $F$ with $m_j<\overline{m}_j$ and $m_k<\overline{m}_k$. Without loss of generality, by assuming $r_j\alpha_j-C_j^D\geq r_k\alpha_k-C_k^D$, we can always increase the defender's payoff by decreasing $m_k$ and increasing $m_j$ until either $m_j=\overline{m}_j$ or $m_k=0$. If $m_k=0$, node $k$ is in set $G$. Here, if the attacker's budget is fully utilized (as in Case 1), we can not guarantee the new payoff by decreasing $m_k$ and increasing $m_j$ is always bigger, since $\alpha_k$ may be much smaller than $\alpha_j$, making the increase of $m_j$ is very small due to limited attacker's budget.

Above all, we can claim that there exists an optimal solution with at most one node in set $F$ with $m_i<\overline{m}_i$.
\end{proof}

\vspace{-2ex}
\subsection{Proof of Theorem~\ref{thm:rounding}}\label{subsec:sgproof}
\begin{proof}
If the set $G$ is empty in the optimal solution $P^\star$, Algorithm~\ref{alg:snefinal} computes the optimal payoffs for the defender by solving (\ref{eqa:nog}). Then, we have $\max_{d,f}Val(d,f)=P^\star$. Therefore, $\frac{|P_{alg}|}{|P^\star|}=1$.

If the set $G$ is not empty in the optimal solution $P^\star$, we first consider the loss of performance due to $\rho_{step}$. Denote $\rho^\star$ as the optimal $\rho_d$ for computing $P^\star$ and $\rho'$ the first $\rho_d$ that is greater than $\rho^\star$ in Algorithm~\ref{alg:snefinal}. Let $\overline{m}^\star_i=m_i(\rho^\star)$ and $\overline{m}'_i=m_i(\rho')$. Let $|P_{\rho'}|$ refer to the total cost when $\rho^\star$ increases to $\rho'$ for the optimal solution $P^\star$. By increasing $\rho^\star$ to $\rho'$, each $\overline{m}^\star_i$ decreases to $\overline{m}'_i$ and the total cost increases in two parts. The first part is due to the decrease of $\overline{m}^\star_i$ for all $i$ in $F$. The second part comes from sets $E$ and $D$. Since $\overline{m}^\star_i$ decreases, the attacker has extra budget to attack the nodes in sets $E$ and $D$, moving these nodes to sets $F$ and $D$. For all sets $F$, $D$, $E$ and $G$ above, we refer to the set allocation in optimal solution $P^\star$. Let $H_F$ and $H_E$ denote the increase of total cost from the two parts, respectively. We have
\begin{align*}
H_F&=\sum_{i\in F}\bigg[r_i(1-\overline{m}'_i\alpha_i)+\overline{m}'_iC_i^D-r_i(1-\overline{m}^\star_i\alpha_i)-\overline{m}^\star_i C_i^D\bigg]\\
   &= \sum_{i\in F}\triangle \overline{m}_i(r_i\alpha_i-C_i^D)
\end{align*}
where $\triangle \overline{m}_i=\overline{m}^\star_i-\overline{m}'_i$.

Let $p'_d$ and $p^\star_d$ be the attacker's attacking probability for the node in set $D$ under $\rho'$ and $\rho^\star$, respectively. Denote $p'_i$ as the attacking probability for node $i$ under $\rho'$. We have
\begin{equation}\label{eqa:pstepproof1}
\begin{aligned}
H_E&=\sum_{i\in E}\bigg[p'_ir_i(1-\overline{m}'_i\alpha_i)+\overline{m}'_iC_i^D-\overline{m}^\star_i C_i^D\bigg]\\
   &+p'_dr_d(1-\overline{m}'_d\alpha_d)+\overline{m}_d'C_d^D-p^\star_dr_d(1-\overline{m}^\star_d\alpha_d)-\overline{m}^\star_d C_d^D\\
   &\leq \sum_{i\in E\cup D}p'_ir_i(1-\overline{m}'_i\alpha_i)
\end{aligned}
\end{equation}
Also note that $p'_i, i\in E \cup D$ must satisfy the resource constraint such that
\begin{equation}\label{eqa:pstepproof2}
\begin{aligned}
\sum_{i\in E\cup D} p'_i\overline{m}'_i\alpha_i\leq \sum_{i=1}^N \triangle \overline{m}_i\alpha_i
\end{aligned}
\end{equation}
where the right-hand side represents an upper bound on the extra budget for nodes in sets $E$ and $D$. From \eqref{eqa:pstepproof1} and \eqref{eqa:pstepproof2}, we have
\begin{displaymath}
H_E\leq \sum_{i=1}^N\triangle \overline{m}_i \alpha_i \cdot \max_i\{\frac{r_i(1-\alpha_i\overline{m}'_i)}{\alpha_i\overline{m}'_i}\}
\end{displaymath}

We further have
\begin{equation}
\begin{aligned}
\triangle \overline{m}_i&=\frac{r_i}{(\rho^\star+r_i)\alpha_i+C_i^A}-\frac{r_i}{(\rho'+r_i)\alpha_i+C_i^A}\\
             &\leq \frac{\rho_{step} r_i\alpha_i}{[(\rho^\star+r_i)\alpha_i+C_i^A][(\rho'+r_i)\alpha_i+C_i^A]}
\end{aligned}
\end{equation}
Since $\rho'$ is one of the $\rho_d$ that Algorithm~\ref{alg:snefinal} iterates through, we have $|P_{alg}|\leq |P_{\rho'}|$. Then, we can compute the approximation ratio as follows:

{\allowdisplaybreaks
\begin{align*}
&\frac{|P_{alg}|-|P^\star|}{|P^\star|} \leq \frac{|P_{\rho'}|-|P^\star|}{|P^\star|}= \frac{H_F+H_E}{|P^\star|}\\
&\leq\frac{\sum_{i\in F \cup D}\triangle \overline{m}_i(r_i \alpha_i-C_i^D)}{|P^\star|}\\
&+\frac{(\sum_{i=1}^N\triangle \overline{m}_i \alpha_i)\cdot \max_i\{\frac{r_i(1-\alpha_i\overline{m}'_i)}{\alpha_i\overline{m}'_i}\}}{|P^\star|}\\
& \leq \frac{\sum_{i\in F \cup D}\rho_{step}\frac{r_i\alpha_i(r_i\alpha_i-C_i^D)}{[(\rho^\star+r_i)\alpha_i+C_i^A]^2}}{|P^\star|}\\
&+\frac{(\sum_{i=1}^N \frac{\rho_{step} r_i\alpha_i^2}{[(\rho^\star+r_i)\alpha_i+C_i^A][(\rho'+r_i)\alpha_i+C_i^A]})\cdot \max_i\{\rho'+\frac{C_i^A}{\alpha_i}\}}{|P^\star|}\\
&\leq \rho_{step}\cdot \bigg( \frac{\sum_{i\in F\cup D}{\overline{m}^\star_i}^2(r_i\alpha_i-C_i^D)\alpha_i/r_i}{\sum_{i\in F\cup D}[r_i(1-\overline{m}^\star_i\alpha_i)+\overline{m}^\star_iC_i^D]}\\
&+ \frac{\sum_{i=1}^N\frac{r_i\alpha_i^2}{C_i^A\cdot C_i^A}\cdot \max_i\frac{C_i^A}{\alpha_i}+\sum_{i=1}^N\frac{r_i\alpha_i}{(\rho^\star+r_i)\alpha_i+C_i^A}}{\sum_{i\in G}r_i}     \bigg)\\
& \leq \rho_{step}\bigg(\max_i\{\frac{\overline{m}^\star_i\alpha_i(r_i\alpha_i-C_i^D)}{r_iC_i^D}\}\\
&+\frac{N\max_i\frac{r_i\alpha_i}{C_i^A}\max_i\frac{\alpha_i}{C_i^A}\max_i\frac{C_i^A}{\alpha_i}+N\max_i\frac{r_i\alpha_i}{C_i^A} }{\min_ir_i}\bigg)\\
&\leq \rho_{step}\left( \max_i\{\frac{\alpha_i}{C_i^D}\}+N\frac{\max_i\frac{r_i\alpha_i}{C_i^A}(1+\max_i\frac{\alpha_i}{C_i^A}\max_i\frac{C_i^A}{\alpha_i})}{\min_i r_i}  \right)\\
&\leq \rho_{step}\cdot O(N) \numberthis \label{rhostepeqa}
\end{align*}}
A similar argument can be used to bound the loss of performance due to rounding parameter $\delta$. The only difference is the decrease of $\overline{m}^\star_i$ which satisfies $\triangle \overline{m}_i\leq \frac{\rho_{step} r_i\alpha_i}{[(\rho^\star+r_i)\alpha_i+C_i^A]^2}+\delta$. The rest is very similar to (\ref{rhostepeqa}). It follows that $\frac{|P_{alg}|}{|P^\star|}\leq 1+(\rho_{step}+\delta)O(N)$ as desired.
\end{proof}

\end{document}